\documentclass{ieeeconf}

\usepackage{amsmath,amssymb,amsfonts}
\usepackage{algorithmic}
\usepackage{graphicx}
\usepackage{algorithm,algorithmic}
\usepackage{hyperref, mathtools}
\hypersetup{hidelinks=true}
\usepackage{textcomp}
\def\BibTeX{{\rm B\kern-.05em{\sc i\kern-.025em b}\kern-.08em
    T\kern-.1667em\lower.7ex\hbox{E}\kern-.125emX}}

\newcommand{\Left}[0]{{\mathrm{L}}}
\newcommand{\Right}[0]{{\mathrm{R}}}

\DeclareMathOperator{\Ad}{Ad}

\DeclareMathOperator{\Diff}{Diff}
\DeclareMathOperator{\Aut}{Aut}

\DeclareMathOperator{\id}{id}

\newcommand{\aut}[0]{{\mathfrak{aut}}}
\newcommand{\X}[0]{{\mathfrak{X}}}

\newcommand{\diff}[0]{{\mathrm{d}}}
\newcommand{\ddt}[0]{\tfrac{\diff}{\diff t}}

\newcommand{\T}{\mathrm{T}}
\newcommand{\V}{\mathrm{V}}

\newcommand{\R}{\mathbb{R}}
\let\H\relax
\newcommand{\H}{\mathrm{H}}
\newcommand{\ver}{\mathrm{ver}}
\newcommand{\hor}{\mathrm{hor}}

\newcommand{\Cn}{\mathrm{Cn}}
\newcommand{\cn}{\mathrm{cn}}

\usepackage{amsthm}
\usepackage{thmtools}
\renewcommand\thmcontinues[1]{\textit{continued}}

\declaretheoremstyle[notefont=\normalfont\itshape,bodyfont=\normalfont]{normaltext}

\newtheorem{theorem}{Theorem}
\newtheorem{corollary}{Corollary}
\declaretheorem[name=Definition,style=normaltext]{definition}

\newtheorem{proposition}{Proposition}
\newtheorem{lemma}{Lemma}
\newtheorem{fact}{Fact}

\newtheorem{assumption}{Assumption}

\usepackage{enumerate}

\usepackage{comment}

\usepackage[compress]{cite} 

\usepackage{ulem}

\usepackage{tikz}
\usetikzlibrary{positioning,arrows.meta}
\usetikzlibrary{calc}
\usetikzlibrary{shapes.geometric}
\usetikzlibrary{backgrounds}

\usepackage{microtype}

\usepackage{tikzscale}

\tikzset{
    auto,
    >=latex,
    block/.style={
        draw,
        semithick,
        rectangle,
        minimum height=.7cm,
        minimum width=.7cm,
        align=center
    },
    integrator/.style={
        draw,
        ellipse,
        semithick,
        minimum height=.75cm,
        minimum width=.5cm,
        inner sep=0pt,
    },
    sum/.style={
        draw,
        semithick,
        circle,
        minimum size=0.6cm,
        inner sep=0pt, 
        append after command={
        \pgfextra{\let\LN\tikzlastnode}
            node[above=-.04cm]     at (\LN.center) {\tiny$+$}
            node[below=-.04cm]    at (\LN.center) {\tiny$+$}
        }
    },
    arrow/.style={
        -{latex[length=2mm]},
		semithick
    },
    splice/.style={
        circle, 
		fill,
        black,
        minimum size=.1cm,
        inner sep = 0pt,
        node contents = {}
    },
    node distance=1.5cm and 2cm
}

\definecolor{ieee_blue}{rgb}{0,0.263,0.576}
\definecolor{ieee_green}{rgb}{0,0.576,0.263}

\usepackage{amsmath}
\usetikzlibrary{arrows.meta,calc}

\newif\ifshowgravity
\showgravitytrue           %
\newif\ifshowinertialframe
\showinertialframetrue
\newif\ifshowequations
\showequationsfalse

\definecolor{FrameBlue}{rgb}{0,0.263,0.576}
\definecolor{ToolOrange}{rgb}{0.576,0,0.063}
\definecolor{RotorGray}{RGB}{218,220,223}
\definecolor{BodyGray}{RGB}{120,126,132}
\definecolor{LinkGray}{RGB}{155,159,164}

\pgfdeclarelayer{back}
\pgfdeclarelayer{front}
\pgfsetlayers{back,main,front}

\tikzset{
  arm outline/.style={draw=black!72,line width=3.0pt},
  arm/.style={draw=BodyGray!78,line width=2.05pt},
  boom outline/.style={draw=black!75,line width=5.6pt},
  boom/.style={draw=LinkGray,line width=4.2pt},
  rotor disk/.style={draw=none,fill=RotorGray,line width=.55pt},
  axis/.style={-{Latex[length=3mm,width=2.5mm]},
                    line width=1.5pt},
  force/.style={-{Latex[length=3mm,width=2.5mm]},
                    line width=1.5pt},
  offset guide/.style={draw=ToolOrange,line width=1pt,
                       dash pattern=on 2pt off 1.25pt},
  tool geometry/.style={draw=black!85,line width=1.25pt},
  label backing/.style={inner sep=.35pt},
  equation box/.style={draw=black!25,fill=white,rounded corners=2pt,
                       inner xsep=6pt,inner ysep=4pt,align=left}
}

\tikzset{
  rotor spin/.style={
    -{Latex[length=2.6mm,width=3.0mm]},
    draw=black,
    line width=1.1pt
  }
}

\newcommand{\DrawRotor}[4]{%
  \begin{scope}[shift={(#1)},rotate=#2]
    \draw[rotor disk] (0,0) ellipse[x radius=#3,y radius=#4];
  \end{scope}%
}

\newcommand{\DrawRotorSpin}[6]{%
  \begin{scope}[shift={(#1)},rotate=#2]
    \draw[rotor spin]
      ({#3*cos(#5)},{#4*sin(#5)})
      arc[
        start angle=#5,
        end angle=#6,
        x radius=#3,
        y radius=#4
      ];
  \end{scope}%
}

\def\comscale{1.3}
\newcommand{\DrawQuarteredHub}[1]{%
  \begin{scope}[shift={(#1)}]
    \fill[white] (0,0) circle[radius=.125*\comscale];
    \fill[black] (0,0) -- (0:.125*\comscale) arc[start angle=0,end angle=90,
      radius=.125*\comscale] -- cycle;
    \fill[black] (0,0) -- (180:.125*\comscale) arc[start angle=180,end angle=270,
      radius=.125*\comscale] -- cycle;
    \draw[black,line width=.55pt] (0,0) circle[radius=.125*\comscale];
  \end{scope}%
}

\begin{document}

\title{\bfseries A Weak Notion of Symmetry for Control Systems}
\author{Jake Welde, Riley Link, and Pieter van Goor
\thanks{J. Welde is with the Sibley School of Mechanical and Aerospace Engineering, Cornell University, Ithaca, NY,  USA,
        {\tt\small jakewelde@cornell.edu}.}
\thanks{R. Link is with the Center for Applied Mathematics, Cornell University, Ithaca, NY,  USA,
        {\tt\small rjl333@cornell.edu}.}
\thanks{P. van Goor is with the School of Aerospace, Mechanical, and Mechatronic Engineering, The University of Sydney, NSW 2006, Australia, 
        {\tt\small pieter.vangoor@sydney.edu.au}.}
}

\maketitle

\begin{abstract}
    Symmetry (or invariance) is a powerful 
    structural property that enables
    efficient, effective solutions for estimation and control. 
    However, the constraints imposed  
    on a system's dynamics
    by classical invariance 
    make symmetry a very rigid property, which may be broken by external forces or confined to only a portion of the overall  system. 
    Seeking greater flexibility, 
    this work introduces a novel relaxed notion of symmetry, termed ``weak invariance'', 
    in which the non-symmetric part of the dynamics (the ``residual'') can be captured entirely by another control system evolving on the symmetry group.
    Weakly invariant systems are strictly more general than 
    classical
    invariant systems, but they nonetheless enjoy many similar favorable properties.
    In particular, we prove that any weakly invariant system admits a cascade decomposition in which the driven subsystem is group affine, showing that weak symmetry generalizes not only 
    classical
    symmetry, but also the (thus far distinct) class of group affine systems. 
    We also show that a weak symmetry with autonomous residual can be factored out of the system's error dynamics, enabling yet a greater reduction of dimensionality as compared to  
    classical symmetries.
    Finally, we study the example of
    an aerial vehicle under the influence of gravity, for which we propose a nine-dimensional weak symmetry (strictly containing the system's familiar four-dimensional 
    classical
    symmetry). 
    Weak invariance thus generalizes classical symmetry while also preserving key structural properties, thereby laying a foundation for  more flexible 
    methods of symmetry-informed control.
\end{abstract}

\section{Introduction}
\label{sec:introduction}

Symmetry
is a powerful property 
exhibited by many control systems of practical significance, in which a family of transformations (parameterized by elements of a Lie group) leaves the dynamics invariant.
The study of symmetry in dynamical systems began in the mechanics literature for Lagrangian systems \cite{poincare1901_euler_poincare_equation, Noether1918, arnold1966geometrie}, and early work in the control community explored controllability for invariant systems on Lie groups \cite{brockett1972systemtheory, JurdjevicSussmann1972Control}. 
Following those seminal works, invariance was studied in
port-Hamiltonian systems \cite{vanderschaft1981symmetries}
and general nonlinear systems  \cite{Grizzle1985} evolving on arbitrary smooth manifolds, 
revealing the deep implications of symmetry on system structure.

\begin{figure}[t]
    \centering
    \vspace{6pt}
\begin{center}
\resizebox{.95\columnwidth}{!}{
    \begin{tikzpicture}[
    line width=1pt,
    every node/.style={
        font=\sffamily\footnotesize,
        align=center
    }
]

\draw[line width=1.5pt, ToolOrange] (0,0) ellipse [x radius=3.5cm, y radius=3.5cm];
\node[ToolOrange,
    text width=5cm,
] at (0,2.25) { \bfseries\normalsize Weakly $\mathbf{\Phi}$-Invariant \\[2pt] Systems};

\node[black,
    text width=2cm,
] at (0,-.45) { Left-Invariant Systems};

\draw[line width=1.5pt, ieee_green] (1.0,-.4) ellipse [x radius=2.05cm, y radius=2.05cm];
\node[ieee_green, 
    text width=3cm,
] at (2.05,-.45) {\bfseries Group Affine \\ Systems};

\draw[line width=1.5pt, ieee_blue] (-1.0,-.4) ellipse [x radius=2.05cm, y radius=2.05cm];
\node[ieee_blue,
    text width=2cm,
] at (-2.05,-.45) {\bfseries $\mathbf{\Phi}$-Invariant Systems};

\draw[line width=1pt] (-4,-4) rectangle (4,4);
\node[text width=2cm] at (3.1,-3.2) {General Nonlinear Systems};

\end{tikzpicture}
}
\end{center}
\caption{In this work, we propose a relaxed notion of symmetry for control systems, which generalizes both classically $\Phi$-invariant control systems on arbitrary smooth manifolds and group affine systems on Lie groups. We show that weakly invariant systems enjoy many favorable properties analogous to those exhibited under more familiar notions of symmetry.
}
\label{fig:venn_diagram}
\end{figure}

A vast array of 
methods
in control, estimation, and learning have leveraged such structural insights to enable substantial gains  in efficiency, generalization, or performance. 
Symmetry has played a significant role
in the design of tracking controllers \cite{martin2004invariant, Welde2024}, in the control of multi-agent systems \cite{Vasile2018}, and in understanding the relationship between internal shape changes and overall locomotion in biological organisms and robotic systems \cite{Shapere_Wilczek_1989,BKMM}.
Simultaneously, symmetry has been exploited in observer design; for invariant systems on Lie groups, the error dynamics governing relative motion are state-independent \cite{Bonnabel2008tac, Bonnabel2009tac}, and equivariant systems evolving on homogeneous spaces enjoy similar properties \cite{mahony2022ObserverDesignNonlinear}, facilitating the design of observers with strong convergence properties. 
Meanwhile, in data-driven control, the inductive bias of symmetry has enabled distributed learning of safe controllers \cite{Bousias2025}, efficient planning algorithms \cite{zhao2023symplan}, and improved generalization in reinforcement learning for robotic manipulation \cite{Wang2022}.
In many cases, symmetry 
can be used to establish a rigorous 
correspondence  between the original control system and a lower-dimensional abstraction thereof (\textit{e.g.}, via
a bisimulation relation \cite{vanderSchaft2004bisimulation} %
or a homomorphism of Markov decision processes \cite{Panangaden2024}), benefiting sample efficiency and enabling generalization by construction \cite{welde2025leveragingsymmetryacceleratelearning}.

Although the classical notion of symmetry provides a powerful framework for analysis and design, the strict requirements it imposes on a system's dynamics limit its broader applicability. 
A number of authors have sought to overcome this by introducing other forms of symmetry with similar structural characteristics.
An ``equivariant system'' \cite{Bonnabel2009tac,mahony2022ObserverDesignNonlinear} exhibits symmetry with respect to a transitive action on the state space along with another action of the same group on the input space; remarkably, the dependence of the error dynamics of an equivariant system on the current state can be captured entirely by the transformation of the input \cite{mahony2022ObserverDesignNonlinear}, which has been exploited in a large number of observer designs \cite{Barrau2017,vangoor2023EquivariantFilterEqF}. Relatedly, a ``fundamental system'' \cite{van2025synchronous} can be lifted to a left-invariant system on a (typically, large) Lie group, facilitating the design of observers whose errors evolve independently of the state or inputs and enabling almost-global stability \cite{Mahony2008,2018zlotnikGradientbasedObserverSimultaneous,2021berkaneNonlinearNavigationObserver,vangoor2025INS,liu2025global}.
Among these, particularly well-studied are the group affine systems on Lie groups \cite{Markus1981Controllability, AyalaTirao1999Linear,Barrau2017}, of which left-invariant systems are a special case.
Although group affine systems need not be  left-invariant nor right-invariant in general, they nonetheless enjoy state-independent error dynamics, enhancing the stability and consistency of many filters \cite{Barrau2017}.
While equivariant, fundamental, and group affine systems each enjoy certain favorable properties reminiscent of classically invariant systems, they all rely on a transitive symmetry on the state space, and therefore do not in fact generalize the classical notion of invariance, where the dimension of the group may be smaller than that of the state space.
Moreover, although left-invariant systems on Lie groups are a special case of both group affine systems and of classically  invariant systems on general manifolds (see Fig.~\ref{fig:venn_diagram}), a more unified notion of symmetry---simultaneously capturing the structure enjoyed by 
both these classes of systems---has been absent.

Only a few authors have proposed notions of symmetry that are strictly more general than classical invariance.
In parallel to early work on classical symmetry, the very general notion of ``partial symmetry'' \cite{nijmeijer1985partial} was proposed to study control systems in which the non-symmetric part of the dynamics is tangent to the orbits of the symmetry group. 
Both classically invariant and partially invariant systems admit a cascade decomposition in which the driven subsystem evolves on the symmetry group; however, in the case of classical symmetry, the subsystem on the symmetry group is left-invariant \cite{Grizzle1985}, whereas for partial symmetry, it has no special structure whatsoever \cite{nijmeijer1985partial}.
Much more recently, \cite{Wang2022ApproximateEquivariance} proposed ``approximate equivariance'', where the failure of the system to be classically invariant is small in a certain sense.
Both these approaches considerably generalize classical invariance, but, in doing so, they sacrifice many algorithmically useful properties that invariance provides.

In this work, we introduce a novel relaxed notion of symmetry that we term ``weak invariance'', which is a strict generalization of classical (henceforth, ``strong'') invariance.
This weak notion of symmetry 
provides many useful structural features reminiscent of strong invariance, and it moreover generalizes group affine systems to non-transitive symmetries.
We begin by introducing a ``residual'' term that captures the non-symmetric part of a given system's dynamics under a given group action. 
This residual affords a convenient characterization of the existing notions of strong symmetry (for which it vanishes completely) and of partial symmetry (for which it is ``vertical'', \textit{i.e.}, tangent to the group orbits). 
Similarly, we define weak symmetry as the situation in which the residual evaluates to infinitesimal generators of the group action, and we show that a system is weakly invariant if and only if the residual can be expressed entirely in terms of another control system evolving on the symmetry group.

Our remaining contributions characterize the rich structure 
that weak symmetry grants a control system.
First, a weakly invariant system admits a cascade decomposition in which the subsystem on the symmetry group is group affine, establishing a middle ground between the existing decomposition results for strong \cite{Grizzle1985} and partial \cite{nijmeijer1985partial} symmetry. 
Second, the group affine structure of these dynamics
 can be exploited to show that the group error dynamics in a weakly invariant system with autonomous residual enjoy a state-independence property analogous to that studied in the invariant filtering literature \cite{Barrau2017,vanGoor2021}.
In effect, such a property amounts to factoring out the symmetry group from the joint dynamics governing the relative motion of two trajectories, thereby reducing its total dimension by that of the group.
Third, a subgroup of a weak symmetry is another weak symmetry if and only if that subgroup is preserved by the control system induced on the original symmetry group by the residual; this condition can also be used to extract the largest strong symmetry from any given weak symmetry. 
Finally, as an example, we consider an aerial vehicle under the symmetry-breaking effects of gravity, which enjoys a well-known four-dimensional strong symmetry corresponding to translation and rotation around the vertical axis. 
We show that the aerial vehicle also admits a larger nine-dimensional weak symmetry, which we factor out of the error dynamics to achieve a greater reduction in dimensionality.
Considering the significant demonstrated benefits of cascade decompositions and state-independent error dynamics across observer design 
\cite{Mahony2008,Barrau2017,vangoor2025INS},
control design and analysis \cite{Welde2024, Welde2023b}, and efficient learning of tracking controllers \cite{welde2025leveragingsymmetryacceleratelearning,pagnini_error}, we believe  the rich structure of weakly invariant systems will enable new methods of symmetry-informed estimation, learning, and control for a broader class of control systems.

This paper is an evolution of our previous conference publication \cite{weldevanGoor2026weaknotionsymmetrydynamical}, in which we proposed the notion of weak symmetry for dynamical systems (lacking inputs) but did not consider control systems, error dynamics, or subgroups of weak symmetries.
The organization of the paper proceeds as follows. In Sec.~\ref{sec:preliminaries}, we recall certain essential aspects of differential geometry and Lie group theory. In Sec.~\ref{sec:notions_of_symmetry}, we introduce the residual, use it to define weak symmetry (and characterize strong and partial symmetry), and discuss technical relationships to other previously-studied notions of symmetry. In Sec.~\ref{sec:system_structure}, we study the implications of increasingly rigid notions of symmetry on the structure of a control system, obtaining a novel decomposition for weak symmetry and sharpening existing results for strong and partial symmetries. In Sec.~\ref{sec:error_dynamics}, we characterize the error dynamics of a weakly invariant system with autonomous residual, and in Sec.~\ref{sec:symmetry_subgroups}, we study subgroups of weak symmetries. 
Finally, Sec.~\ref{sec:aerial_vehicle} considers an aerial vehicle under the influence of gravity, identifying a novel weak symmetry,  
recovering a familiar strong symmetry as a subgroup thereof, and formulating error dynamics reduced by the weak symmetry. 
We  conclude the paper in Sec.~\ref{sec:conclusion}.

\section{Mathematical Preliminaries}
\label{sec:preliminaries}

For a thorough introduction to smooth manifolds, the authors recommend \cite{SmoothLee}.
Consider smooth ($C^\infty$) manifolds $M,N$.
The set of all diffeomorphisms ${f : M \to M}$ is denoted $\Diff(M)$.
The tangent space at each point ${x \in M}$ is denoted as ${\T_x M}$ and the tangent bundle is denoted $\T M$.
For any map ${h : M \to N}$ and any ${y \in N}$, ${h \equiv y}$ means that ${h(x) = y}$ for all ${x \in M}$. 
If $h$ is injective, its left-inverse ${h^{-1} : h(M) \to M}$ is well-defined and satisfies ${h^{-1} \circ h = \id_M}$.
If $h$ is differentiable, then the \textit{differential} of $h$ is a map between tangent bundles denoted ${\diff h : \T M \to \T N}$.
Given a map with two arguments ${f : M_1 \times M_2 \to N}$, we will often use the partial map notation ${f_x = f(x, \, \cdot\, ) : M_2 \to N}$ for any fixed ${x \in M_1}$, and likewise ${f^y = f(\, \cdot\, , y) : M_1 \to N}$ for any fixed ${y \in M_2}$. $\X(M)$ denotes the set of vector fields on $M$ (\textit{i.e.}, all smooth maps ${v: M \to \T M}$ such that ${v(x) \in \T_x M}$ for all ${x \in M}$).

\subsection{Lie Groups and Automorphisms}

A Lie group $G$ is a smooth manifold equipped with a smooth group structure.
The product of any two elements ${g,h \in G}$ is written as ${gh \in G}$, and the identity element is denoted ${e \in G}$.
The left and right translations ${\Left,\Right : G \times G \to G}$ are defined by ${\Left_g(h) := gh}$ and ${\Right_g(h) := hg}$, respectively.
A vector field ${v \in \X(G)}$ is \textit{left-invariant} if  ${\diff \Left_{g^{-1}} \circ v \circ \Left_g = v}$ for all ${g \in G}$, and the set of all left-invariant vector fields is denoted ${\mathfrak{left}(G) \subset \X(G)}$. 
The Lie algebra ${\mathfrak{g} = \T_e G}$ is identified with the tangent space of the group at the identity element $e$. 
The Lie bracket in $\mathfrak{g}$ is induced by its isomorphism with the set of left-invariant vector fields $\mathfrak{left}(G)$.

For a Lie group $G$, a diffeomorphism ${\sigma : G \to G}$ is called an \textit{automorphism}
if ${\sigma(g_1)\sigma(g_2) = \sigma(g_1 g_2)}$ for all ${g_1, g_2 \in G}$.
The set of all automorphisms of $G$ is denoted ${\Aut(G) \subset \Diff(G)}$.
When $G$ is connected, $\Aut(G)$ is itself a Lie group of dimension at most $(\dim G)^2$ \cite{Hochschild1952_automorphism}.
A vector field ${w \in \mathfrak{X}(G)}$ is called \textit{group linear} if 
\begin{align}
    w(gh) = \diff \Left_g \circ w(h) + \diff \Right_h \circ w(g),
    \label{definition_of_group_linear_vector_field}
\end{align}
for all ${g, h \in G}$.
The set of all group linear vector fields is written as ${\aut(G) \subset \mathfrak{X}(G)}$.
As suggested by the notation, when $\Aut(G)$ is a Lie group, its Lie algebra is  $\aut(G)$.

For any $g \in G$, the conjugation map ${\Cn_g : G \to G}$ defined by ${\Cn_g(h) = g h g^{-1}}$ is an automorphism.
Differentiating $\Cn_g(h)$ with respect to  $h$ at the identity ${h = e}$ yields the familiar adjoint representation 
$\Ad_g : \mathfrak{g} \to \mathfrak{g}$, given by
\begin{align}
    \Ad_g(\xi) = \diff(\Cn_g)(\xi) = \diff\Left_g\circ\diff\Right_{g^{-1}}(\xi).
\end{align}
Conversely, differentiating $\Cn_g(h)$ with respect to $g$ at the identity ${g = e}$ yields a map ${\cn : \mathfrak{g} \to \aut(G)}$ defined such that for any $\xi \in \mathfrak{g}$, the vector field $\cn_\xi \in \aut(G)$ is given by
\begin{align}
    \cn_\xi(h)
    = \diff (\Cn^h) (\xi)
     = \diff \Right_h(\xi) - \diff \Left_h (\xi).
     \label{definition_of_cn_vector_field}
\end{align}

\subsection{Group Actions}

A (left\footnote{A right action  is defined similarly, but with ${\Phi_g \circ \Phi_h = \Phi_{hg}}$.})
action of a Lie group $G$ on a manifold $M$ is a smooth map
${\Phi : G \times M \to M}$ such that for all $g, h \in G$,
\begin{align*}
    \Phi_g \circ \Phi_h = \Phi_{gh} \quad \textrm{and} \quad \Phi_e = \id_M.
\end{align*}
The \textit{orbit} of any point ${x \in M}$ is given by ${\Phi(G,x) = }$ ${\{\Phi_g(x) : g\in G\}}$. 
The quotient space $M/G$ is the set of all orbits of $\Phi$, and the map ${\pi : M \to M/G}$ sends each point in $M$ to its orbit.
The \textit{infinitesimal generator} of $\Phi$ is the map ${(\,\cdot\,)_M : \mathfrak{g} \to \mathfrak{X}(M)}$ defined such that
\begin{equation}
    \xi_M(x) = 
    \diff \Phi^x (\xi),
\end{equation}
for all ${x \in M}$, and the set of all such generators (also called the \textit{fundamental vector fields} associated with $\Phi$) is given by
\begin{equation}
    \mathfrak{g}(M) = \{\xi_M : \xi \in \mathfrak{g}\}.
\end{equation}
Finally, $\Phi$ is \textit{free} if $\Phi_g$ has no fixed points for all ${g \neq e}$ (\textit{i.e.}, if there exists ${x \in M}$ such that ${\Phi_g(x) = x}$, then ${g = e}$), and $\Phi$ is  \textit{proper} if the map ${(g,x) \mapsto \big(\Phi_g(x),x\big)}$ is a proper map (\textit{i.e.}, the preimage of any compact set is compact).

\subsection{Principal $G$-Bundles and Principal Connections}
\label{subsec:principal_bundles}

When $\Phi$ is free and proper, the quotient space $M/G$ inherits a unique smooth manifold structure such that $\pi$ is a surjective submersion, in which case ${\pi : M \to M/G}$ can be viewed as a principal $G$-bundle. A \textit{(global) section} is a smooth map ${s : M/G \to M}$ such that ${\pi \circ s = \id_{M/G}}$. Any such section 
(when it exists) 
induces a \textit{(global) trivialization}, decomposing $M$ into its base space $M/G$ and the fiber $G$, so that every point $x \in M$ can be uniquely and smoothly decomposed as ${x = \Phi_g \circ s(y)}$, where ${y = \pi(x)}$.

The principal bundle structure induces a \textit{vertical distribution} ${\V M := \ker \diff \pi \subseteq \T M}$ containing all vectors in $\T M$ tangent to the orbits. The set of \textit{vertical vector fields} is given by
\begin{equation}
    \mathfrak{vert}(M) = \big\{ f \in \X(M) : 
    f(x) \in \V_x M \ 
    \textrm{for all}
    \ x \in M
    \big\}.
\end{equation}
At each point $x \in M$, the map ${\diff\Phi^x : \mathfrak{g} \to \V_x M}$ is a bijection.
Moreover, one may choose any smooth fiber-preserving map ${\hor : \T M \to \T M}$ such that:
\begin{itemize}
    \item $\hor$ is idempotent (\textit{i.e.}, $\hor\circ\hor = \hor$),
    \item $\hor$ is equivariant (\textit{i.e.}, $\hor \circ \diff\Phi_g = \diff\Phi_g\circ\hor$), and
    \item $\hor$ annihilates only $\V M$ (\textit{i.e.}, $\ker (\hor) = \V M$).
\end{itemize}
Then ${\H M := \hor(\T M)}$ is called the \textit{horizontal distribution}, such that ${\T M = \H M \oplus \V M}$.
Any tangent vector can then be split uniquely as
${v_x = \hor(v_x)+\ver(v_x)},$
where the vertical projection is ${\ver := \id_{TM} - \, \hor}$. 
Conversely, a \textit{principal connection one-form} is a map  ${\mathcal{A} : \T M \to \mathfrak{g}}$  such that
\begin{itemize}
    \item $\mathcal{A}$ is equivariant (\textit{i.e.}, ${\mathcal{A}\circ \diff\Phi_g = \Ad_g \circ \, \mathcal{A}}$ for all ${g \in G}$), 
    \item $\mathcal{A}$ maps vertical tangent vectors to their generators (\textit{i.e.}, ${\mathcal{A}\big(\xi_M(x)\big) = \xi}$ for all ${x \in M}$), and
    \item $\mathcal{A}$ annihilates $\H M$ (\textit{i.e.}, $\mathcal{A}(\H M) = 0 \in {\mathfrak{g}}$).
\end{itemize}
Principal connection one-forms and horizontal projections are in one-to-one correspondence. In particular, we may define the principal connection ${\mathcal{A} : \T M \to \mathfrak{g}}$ corresponding to any valid choice of $\hor : \T M \to \T M$ by pulling the vertical projection back to the Lie algebra, namely,
\begin{equation}
    \mathcal{A}(v_x) := \big(\diff \Phi^x\big)^{-1} \circ \ver(v_x),
    \label{define_connection_one_form}
\end{equation}
since ${\diff\Phi^x : \mathfrak{g} \to \V_x  M}$ is a bijection.

In the remainder of the paper, we will choose a horizontal projection induced by a choice of section ${s: M/G \to M}$ (although our global results can easily be adapted to local results using local sections). 
In particular, the \textit{canonical flat connection}
(making $\H M$  integrable) 
induced by a chosen section $s$ is defined so that for each point ${x = \Phi_g \circ s(y)}$, 
\begin{align}
    \label{eq:define_vertical_projection_flat_connection}
    \hor_x(v_x) &:= \diff \Phi_g \circ \diff s \circ \diff \pi(v_x),    
\end{align}
and since $\ver (v_x) = v_x - \hor(v_x)$, we have
\begin{equation}
    \label{eq:define_horizontal_projection_flat_connection}
    \ver_x(v_x) = v_x - \diff \Phi_g \circ \diff s \circ \diff \pi(v_x).
\end{equation}
Thus, from \eqref{define_connection_one_form}, the connection one-form at ${x \in M}$ is given  by
\begin{align}
    \label{eq:define_principal_connection_one_form_flat_connection}
    \mathcal{A}_x(v_x) &= \big(\diff \Phi^x\big)^{-1} \big( v_x - \diff \Phi_g \circ \diff s \circ \diff \pi(v_x) \big).
\end{align}

\subsection{Control Systems}

A \textit{control system} with state space $M$ and input space $\mathbb{U}$ is modeled as a smooth\footnote{${f : \mathbb{U} \to \X(M)}$ is smooth if ${(x,u) \mapsto f_u(x)}$ is smooth (\textit{i.e.}, ${C}^\infty$).} map ${f : \mathbb{U} \to \X(M)}$. For each ${u \in \mathbb{U}}$, we write ${f_u := f(u) \in \X(M)}$.
A \textit{solution} of $f$ consists of a smooth \textit{trajectory} ${x : \mathbb{R} \to M}$ and a smooth \textit{input signal} ${u : \mathbb{R} \to \mathbb{U}}$ such that for all $t \in \mathbb{R}$, 
\begin{equation}
    \dot{x}(t) = f_{u(t)}\big(x(t)\big).
\end{equation}
A control system ${f : \mathbb{U} \to \X(M)}$ is said to be \textit{autonomous} if ${f_{u_1} = f_{u_2}}$ for all ${u_1, u_2 \in \mathbb{U}}$.

For any Lie group $G$, 
consider a control system ${f : \mathbb{U} \to \X(G)}$.
This system is \textit{left-invariant} if for all ${u \in \mathbb{U}}$, 
we have ${f_u \in \mathfrak{left}(G)}$.
More generally, $f$ is \textit{group affine} if for all $u \in \mathbb{U}$ and all $g, h \in G$, we have
\begin{align*}
    f_u(gh) = \diff \Left_g \circ f_u(h) + \diff \Right_h \circ f_u(g) - \diff \Left_g \circ \diff \Right_h \circ f_u(e).
\end{align*}
Alternatively, the following fact provides a more structural characterization of group affine control systems.

\begin{fact}[{\textit{cf.} {\cite[Thm. 4.3]{vanGoor2021}}}]
\label{fact:group_affine_group_linear_decomposition}
    A control system ${f : \mathbb{U} \to \X(G)}$ is group affine if and only if 
    there exist (unique) control systems 
    \begin{align}
        {w : \mathbb{U} \to \aut(G) }
        ,  
        &&
        {v : \mathbb{U} \to \mathfrak{left}(G)}
    \end{align}
    such that ${f_u = w_u + v_u}$ for all ${u \in \mathbb{U}}$.
    In particular, 
    \begin{align}
        w_u(g) = f_u(g) - \diff \Left_g  \circ f_u(e),
        &&
        v_u(g) = \diff \Left_g \circ f_u(e),
        \label{eq:left_invariant_part_determined_by_evaluation_at_identity}
    \end{align}
    for all $g \in G$ and $u\in \mathbb{U}$.
\end{fact}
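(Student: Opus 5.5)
The plan is to prove both directions by a direct computation in which the candidate splitting is fixed in advance by \eqref{eq:left_invariant_part_determined_by_evaluation_at_identity}. This choice also gives uniqueness. Suppose ${f_u = w_u + v_u}$ with ${w_u \in \aut(G)}$ and ${v_u \in \mathfrak{left}(G)}$. Setting ${g = h = e}$ in \eqref{definition_of_group_linear_vector_field} gives ${w_u(e) = 2 w_u(e)}$, so ${w_u(e) = 0}$. Hence ${v_u(e) = f_u(e)}$. Left-invariance then forces ${v_u(g) = \diff \Left_g \circ f_u(e)}$, and therefore ${w_u(g) = f_u(g) - \diff\Left_g\circ f_u(e)}$. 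Both pieces are determined by $f$, which proves uniqueness, and smoothness in $(x,u)$ is inherited from $f$.

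For the ``if'' direction, assume such $w,v$ exist and expand ${f_u(gh) = w_u(gh) + v_u(gh)}$. Group linearity gives ${w_u(gh) = \diff\Left_g \circ w_u(h) + \diff\Right_h \circ w_u(g)}$. Substituting ${w_u = f_u - v_u}$ and ${v_u(k) = \diff\Left_k \circ f_u(e)}$ turns this into $f_u$-terms plus terms of the form $\diff\Left_g\diff\Left_h f_u(e)$ and $\diff\Right_h\diff\Left_g f_u(e)$. Then ${\Left_g \circ \Left_h = \Left_{gh}}$ makes the term $-\diff\Left_g\diff\Left_h f_u(e)$ cancel against ${v_u(gh) = \diff\Left_{gh} f_u(e)}$. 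The leftover term $-\diff\Right_h\diff\Left_g f_u(e)$ is exactly the affine correction $-\diff\Left_g\diff\Right_h f_u(e)$, because left and right translations commute.

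For the ``only if'' direction, assume $f$ is group affine and define $w,v$ by \eqref{eq:left_invariant_part_determined_by_evaluation_at_identity}. That $v_u$ is left-invariant is immediate: ${\diff\Left_{g^{-1}} \circ v_u(gk) = \diff\Left_{g^{-1}}\diff\Left_{gk} f_u(e) = \diff\Left_k f_u(e) = v_u(k)}$. To check that $w_u$ is group linear, compute ${w_u(gh) = f_u(gh) - \diff\Left_{gh} f_u(e)}$ and substitute the group affine identity for $f_u(gh)$. Separately expand ${\diff\Left_g w_u(h) + \diff\Right_h w_u(g)}$ and match terms, again using ${\Left_g\Right_h = \Right_h\Left_g}$ and ${\Left_g\Left_h=\Left_{gh}}$.

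The main obstacle is simply keeping the cross terms straight, specifically recognizing $\diff\Left_g\diff\Right_h f_u(e)$ as $\diff\Right_h\diff\Left_g f_u(e)$. Once that commutation is used, the computation closes.
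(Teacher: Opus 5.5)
Your proof is correct: evaluating group linearity at $g=h=e$ to get $w_u(e)=0$ fixes the splitting and gives uniqueness, and both directions then reduce to the identities $\diff\Left_g\diff\Left_h=\diff\Left_{gh}$ and $\diff\Left_g\diff\Right_h=\diff\Right_h\diff\Left_g$, which is the standard direct verification. The paper gives no proof of this Fact (it defers to \cite[Thm.~4.3]{vanGoor2021}), but your key step $w_u(e)=2\,w_u(e)$ is the same manipulation the paper uses in the proof of Corollary~\ref{cor:kernel_to_strong_symmetry}, so your argument is consistent with its methods.
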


\section{Symmetries of Control Systems}
\label{sec:notions_of_symmetry}

Throughout this paper, we consider a control system ${f : \mathbb{U} \to \mathfrak{X}(M)}$ and a smooth Lie group action ${\Phi : G \times M \to M}$.
We make the following standing assumption, which will remain in force for the rest of the paper.

\begin{assumption}
    ${\Phi : G \times M \to M}$ is a free and proper action.
\end{assumption}

The following definition recalls a classical notion of symmetry and $\Phi$-invariance for control systems.

\begin{definition}[{\textit{cf.} \cite{Grizzle1985}}]
The control system ${f : \mathbb{U} \to \X(M)}$ is  \textit{(strongly) $\Phi$-invariant} if for all $g \in G$, $x \in M$, and $u \in \mathbb{U}$,
\begin{align}
    \label{eq:phi_invariance_commutative_square}
    f_u \circ \Phi_g(x) = \diff \Phi_g \circ f_u(x).
\end{align}
In this case, $\Phi$ is said to be a \textit{(strong) symmetry} of $f$.
\end{definition}

We refer to this property as ``strong'' invariance to clearly distinguish it from the weaker properties to be studied.
We postpone a more thorough discussion of nomenclature as it relates to other flavors of symmetry until Section~\ref{sec:symmetry_nomenclature}. 

An equivalent definition of strong invariance is obtained by applying $\diff \Phi_{g^{-1}}$  to both sides of \eqref{eq:phi_invariance_commutative_square} and collecting terms.
That is, the system $f$ is strongly $\Phi$-invariant if and only if
\begin{align*}
    \diff \Phi_{g^{-1}} \circ f_u \circ \Phi_g - f_u = 0.
\end{align*}
This perspective provides a way to study relaxed notions of symmetry via the failure of this quantity to vanish identically.
To this end, we introduce the following \textit{residual}, which 
captures the ``non-symmetric part'' of the dynamics and thus
vanishes precisely when $\Phi$ is a strong symmetry of $f$.

\begin{definition}\label{dfn:residual}
    The \textit{residual} of ${f : \mathbb{U} \to \X(M)}$ with respect to ${\Phi : G \times M \to M}$ is the map
${\Delta :  G \times \mathbb{U} \to \X(M)}$ given by
\begin{equation}
        \Delta{(g,u)} = \diff \Phi_{g^{-1}} \circ f_u \circ \Phi_g - f_u.
        \label{eq:residual_definition}
\end{equation}
The residual is \textit{autonomous} if there exists ${\Delta : G \to \X(M)}$ such that $\Delta(g) = \Delta(g,u)$ 
for all $g \in G$ and ${u \in \mathbb{U}}$. We often use the convenient shorthand $\Delta_{(g,u)} := \Delta(g,u) \in \X(M)$.
\end{definition}

The residual of \textit{any} control system enjoys the following ``one-cocycle'' property, which will be important in the sequel.

\begin{lemma}
\label{lemma:residual_one_cocycle}
    For all ${g, h \in G}$ and ${u \in \mathbb{U}}$, we have
    $$\Delta_{(gh,u)} = \Delta_{(h,u)} + \diff \Phi_{h^{-1}} \circ \Delta_{(g,u)} \circ \Phi_h.$$
\end{lemma}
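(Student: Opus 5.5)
The plan is to expand $\Delta_{(gh,u)}$ directly from Definition~\ref{dfn:residual}, using the action axiom $\Phi_{gh} = \Phi_g \circ \Phi_h$ and the chain rule. These give $\diff\Phi_{(gh)^{-1}} = \diff\Phi_{h^{-1}g^{-1}} = \diff\Phi_{h^{-1}} \circ \diff\Phi_{g^{-1}}$. The proof then reduces to adding and subtracting a single intermediate term.

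Concretely, I would first write
\begin{align*}
    \Delta_{(gh,u)} = \diff\Phi_{h^{-1}} \circ \diff\Phi_{g^{-1}} \circ f_u \circ \Phi_g \circ \Phi_h - f_u .
\end{align*}
From the definition of $\Delta_{(g,u)}$, we have $\diff\Phi_{g^{-1}} \circ f_u \circ \Phi_g = \Delta_{(g,u)} + f_u$ as vector fields on $M$. Substituting this, and using linearity of $\diff\Phi_{h^{-1}}$ on each tangent space (so that it distributes over the pointwise sum of vector fields), gives
\begin{align*}
    \Delta_{(gh,u)} = \diff\Phi_{h^{-1}} \circ \Delta_{(g,u)} \circ \Phi_h + \diff\Phi_{h^{-1}} \circ f_u \circ \Phi_h - f_u .
\end{align*}
The last two terms are exactly $\Delta_{(h,u)}$, which yields the claimed identity.

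There is no real obstacle here. The only point requiring care is bookkeeping. Since the action is a left action, $(gh)^{-1} = h^{-1}g^{-1}$ gives $\Phi_{(gh)^{-1}} = \Phi_{h^{-1}} \circ \Phi_{g^{-1}}$, so $\diff\Phi_{h^{-1}}$ ends up outermost. Also, each expression should be evaluated pointwise at $x \in M$, so that the tangent vectors being added lie in the same tangent space $\T_x M$: the vector $\Delta_{(g,u)}(\Phi_h(x))$ lies in $\T_{\Phi_h(x)}M$ and is mapped into $\T_x M$ by $\diff\Phi_{h^{-1}}$.
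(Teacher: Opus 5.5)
Your proposal is correct and is essentially the paper's argument: a direct expansion of the definition using $\Phi_{(gh)^{-1}} = \Phi_{h^{-1}}\circ\Phi_{g^{-1}}$, with $f_u$ added and subtracted once. The only difference is direction. The paper starts from $\diff\Phi_{h^{-1}}\circ\Delta_{(g,u)}\circ\Phi_h$ and expands it into $\Delta_{(gh,u)}-\Delta_{(h,u)}$, whereas you start from $\Delta_{(gh,u)}$ and substitute $\diff\Phi_{g^{-1}}\circ f_u\circ\Phi_g = f_u + \Delta_{(g,u)}$.
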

\begin{proof}
By direct computation, we have
\begin{align}
    &\diff \Phi_{h^{-1}} \circ \Delta_{(g,u)} \circ \Phi_h
    \nonumber
    \\
    &= 
    \diff \Phi_{h^{-1}} \circ \big(
    \diff \Phi_{g^{-1}} \circ f_u \circ \Phi_g - f_u
    \big) \circ \Phi_h
    \nonumber
    \\
    &=
    \diff\Phi_{(gh)^{-1}} \circ f_u \circ \Phi_{gh} - f_u 
    +
    f_u - \diff\Phi_{h^{-1}} \circ f_u
    \circ \Phi_h 
    \nonumber
    \\
    &=
    \Delta_{(gh,u)}
    -
    \Delta_{(h,u)}.
    \qedhere
\end{align}
\end{proof}

The residual allows for the following convenient description of a relaxed notion of symmetry, based on a property first studied in \cite{nijmeijer1985partial}. 

\begin{definition}[{\textit{cf.} \cite{nijmeijer1985partial}}]
\label{defn:res_ps}
The control system ${f : \mathbb{U} \to \X(M)}$ is \textit{partially $\Phi$-invariant} if for all ${g \in G}$ and ${u \in \mathbb{U}}$,
\begin{equation}
    { \Delta_{(g,u)} \in \mathfrak{vert}(M)}.
\end{equation}
In this case, we say that $\Phi$ is a \textit{partial symmetry} of $f$.
\end{definition}

In the following definition, we introduce a different relaxed notion of symmetry, which is a novel contribution of the present work. 
It is also naturally stated in terms of the residual.

\begin{definition}
\label{def:weak_invar_residual}
The control system ${f : \mathbb{U} \to \X(M)}$ is \textit{weakly $\Phi$-invariant} if  for all ${g \in G}$ and ${u \in \mathbb{U}}$,
\begin{equation}
    {\Delta_{(g,u)} \in \mathfrak{g}(M)}.
\end{equation}
In this case, we say that $\Phi$ is a \textit{weak symmetry} of $f$.
\end{definition}

Interestingly, the preceding notions of symmetry establish a filtration of Lie subalgebras of $\X(M)$, given by
\begin{equation*}
    \{0 \} \subseteq \mathfrak{g}(M) \subseteq \mathfrak{vert}(M) \subseteq \X(M),
\end{equation*}
where each subalgebra above respectively describes the allowed residuals under strong symmetry, weak symmetry, partial symmetry, and no symmetry at all.
This clearly shows that weak invariance occupies a middle ground between the previously-studied notions of strong and partial symmetry.

\subsection{A Control System on the Symmetry Group}

While the above definition of weak invariance via the residual is conceptually satisfying, the following proposition establishes a more explicit characterization of weak invariance.

\begin{proposition}\label{prop:weak-invariant_system}
A control system ${f : \mathbb{U} \to \X(M)}$ is weakly $\Phi$-invariant if and only if there exists a control system ${w :\mathbb{U} \to \mathfrak{X}(G)}$ such that for all $g \in G$, $x\in M$, and $u \in \mathbb{U}$,
\begin{align}
    f_u \circ \Phi_g(x) = 
    \diff \Phi\big(w_u(g),f_u(x)\big).
    \label{weak_invariance_of_a_control_system}
\end{align}
In particular, $w : \mathbb{U} \to \X(G)$ is then given by
    \begin{equation}
        w_u(g) := \diff \Left_g \circ \xi_{(g,u)},
        \label{eq:infinitesimal_generator_group_linear_conversion}
    \end{equation}
    where ${\xi : G \times \mathbb{U} \to \mathfrak{g}}$ is the (unique) smooth map defined such that ${\Delta_{(g,u)} = (\xi_{(g,u)})_M}$ for all ${g \in G}$ and ${u \in \mathbb{U}}$.
\end{proposition}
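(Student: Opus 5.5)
The whole argument rests on one identity: the differential of the action splits over the product $G\times M$. For $w_g\in\T_gG$ and $v_x\in\T_xM$ we have
\begin{equation*}
\diff\Phi(w_g,v_x)=\diff\Phi^x(w_g)+\diff\Phi_g(v_x).
\end{equation*}
We also need to rewrite the first term using the action property. From $\Phi^x\circ\Left_g=\Phi_g\circ\Phi^x$ we get $\diff\Phi^x\circ\diff\Left_g(\eta)=\diff\Phi_g\big(\eta_M(x)\big)$ for every $\eta\in\mathfrak{g}$. Combining the two, for $w_g=\diff\Left_g(\eta)$,
\begin{equation*}
\diff\Phi\big(\diff\Left_g(\eta),f_u(x)\big)=\diff\Phi_g\big(\eta_M(x)+f_u(x)\big).
\end{equation*}
With this in hand, both directions reduce to comparing \eqref{weak_invariance_of_a_control_system} with the definition \eqref{eq:residual_definition} of the residual.

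\textbf{Forward direction.} Assume $f$ is weakly $\Phi$-invariant, so $\Delta_{(g,u)}\in\mathfrak{g}(M)$.
\begin{itemize}
\item Since the action is free, $\diff\Phi^x:\mathfrak{g}\to\V_xM$ is a bijection. Hence the generator $\xi_{(g,u)}$ with $\Delta_{(g,u)}=(\xi_{(g,u)})_M$ is unique.
\item For smoothness, pick any $x$. Then $\xi_{(g,u)}=(\diff\Phi^x)^{-1}\big(\Delta_{(g,u)}(x)\big)$, which equals $\mathcal{A}_x\big(\Delta_{(g,u)}(x)\big)$ by \eqref{define_connection_one_form}, since $\Delta_{(g,u)}(x)$ is vertical.
\item The map $(g,u)\mapsto\Delta_{(g,u)}(x)$ is smooth because $f$ and $\Phi$ are, and $\mathcal{A}$ is smooth. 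So $\xi$ is smooth, and therefore so is $w$ defined by \eqref{eq:infinitesimal_generator_group_linear_conversion}.
\item Rearranging the residual definition gives $f_u\circ\Phi_g(x)=\diff\Phi_g\big(f_u(x)+(\xi_{(g,u)})_M(x)\big)$. By the displayed identity with $\eta=\xi_{(g,u)}$, this is exactly $\diff\Phi\big(w_u(g),f_u(x)\big)$.
\end{itemize}

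\textbf{Converse direction.} Assume some $w$ satisfies \eqref{weak_invariance_of_a_control_system}.
\begin{itemize}
\item Set $\eta_{(g,u)}:=\diff\Left_{g^{-1}}w_u(g)\in\mathfrak{g}$.
\item The displayed identity turns \eqref{weak_invariance_of_a_control_system} into $f_u\circ\Phi_g=\diff\Phi_g\circ\big(f_u+(\eta_{(g,u)})_M\big)$.
\item Applying $\diff\Phi_{g^{-1}}$ to both sides gives $\Delta_{(g,u)}=(\eta_{(g,u)})_M\in\mathfrak{g}(M)$, so $f$ is weakly $\Phi$-invariant.
\item By uniqueness of generators, $\eta=\xi$. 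Hence $w$ is forced to be \eqref{eq:infinitesimal_generator_group_linear_conversion}, which justifies the ``in particular'' clause.
\end{itemize}

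The main obstacle is the bookkeeping in the key identity, specifically getting the order right in $\diff\Phi^x\circ\diff\Left_g=\diff\Phi_g\circ(\cdot)_M$. Establishing smoothness of $\xi$ is secondary; it follows from the connection form as described above.
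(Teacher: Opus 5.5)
Your proof is correct and follows essentially the same route as the paper's: both directions reduce to the splitting $\diff\Phi(w_g,v_x)=\diff\Phi^x(w_g)+\diff\Phi_g(v_x)$ and the differentiated identity $\Phi_g\circ\Phi^x=\Phi^x\circ\Left_g$, compared against the definition of the residual. Your proof also spells out two points the paper only asserts: the smoothness of $\xi$ (apply a fixed linear inverse of $\diff\Phi^x$ at one point $x$), and the fact that uniqueness of generators forces $w$ to be \eqref{eq:infinitesimal_generator_group_linear_conversion}.
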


\begin{proof}
        To show sufficiency, we assume that \eqref{weak_invariance_of_a_control_system} holds and act on both sides via $\diff \Phi_{g^{-1}}$. Rearranging terms,
        we obtain
    \begin{align*}
        \diff \Phi_{g^{-1}} \circ f_u \circ \Phi_g(x) - f_u(x)
        &= \diff \Phi_{g^{-1}} \circ \diff \Phi^x \circ w_u(g), 
    \end{align*}
    and thus the residual is given by
    \begin{align}
        \Delta_{(g,u)}(x) = \diff \Phi^{x} \circ \diff \Left_{g^{-1}} \circ w_u(g).
        \label{eq:definition_of_w_from_residual}
    \end{align} 
    Recalling that the infinitesimal generator is given by ${\eta_M(x) = \diff \Phi^x(\eta)}$ for any ${\eta\in\mathfrak{g}}$, and noting that clearly ${\diff \Left_{g^{-1}} \circ w_u(g) \in \mathfrak{g}}$,  it thus follows that ${\Delta_{(g,u)} \in \mathfrak{g}(M)}$.
    
    To show necessity, note that the  weak invariance of $f$ implies the existence of such a map ${\xi : G \times \mathbb{U} \to \mathfrak{g}}$ such that    
    \begin{equation}
    \diff \Phi_{g^{-1}} \circ f_u \circ \Phi_g - f_u
        = (\xi_{(g,u)})_M
    \end{equation}
    for all $g \in G$ and all $u \in \mathbb{U}$.
    Moreover, the uniqueness of $\xi$ follows from the effectiveness of $\Phi$ (since ${(\cdot)_M : \mathfrak{g} \to \X(M)}$ is an injective map). 
    Acting on both sides of the previous equation via $\diff \Phi_g$ and evaluating at any point $x \in M$ yields
    \begin{equation*}
        f_u \circ \Phi_g(x) - \diff \Phi_g \circ f_u(x)  = \diff \Phi_g \circ \diff \Phi^x \circ \xi_{(g,u)}.
    \end{equation*}
    Differentiating the identity ${\Phi_g \circ  \Phi^{x} =  \Phi^x \circ  \Left_g}$ and rearranging terms, we obtain
    \begin{align*}
        f_u \circ \Phi_g(x)  &= \diff \Phi_g \circ f_u (x) + \diff \Phi^{x} \circ \diff \Left_g \circ \xi_{(g,u)}
        \\
        &= \diff \Phi\big(  \diff \Left_g \circ \xi_{(g,u)}, f_u (x)\big).
    \end{align*}
    To complete the argument, it thus suffices to note that \eqref{eq:infinitesimal_generator_group_linear_conversion} is a well-defined control system.
\end{proof}

In view of the last result, it is natural to say that a given control system is weakly invariant, in particular, with respect to another control system evolving on the symmetry group.
We now characterize the properties of this other control system.

\begin{proposition}
\label{prop:properties_of_symmetry_group_control_system}
    Suppose the control system ${f : \mathbb{U} \to \X(M)}$ is weakly $\Phi$-invariant with respect to ${w : \mathbb{U} \to \X(G)}$. Then:
    \begin{enumerate}[(i)]
        \item For any ${u \in \mathbb{U}}$, the vector field ${w_u \in \X(G)}$ is group linear, {\normalfont i.e.}, ${w_u \in \aut(G)}$. \label{item:w_in_aut_G}
        \item 
        ${{w}}$ is autonomous if and only if $\Delta$ is autonomous. 
        \item ${w \equiv 0 \in \X(G)}$ if and only if $f$ is strongly $\Phi$-invariant.
    \end{enumerate}
\end{proposition}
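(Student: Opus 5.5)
The plan is to work entirely through the explicit formula \eqref{eq:infinitesimal_generator_group_linear_conversion}, $w_u(g) = \diff\Left_g\circ\xi_{(g,u)}$, where $\Delta_{(g,u)} = (\xi_{(g,u)})_M$. We also use injectivity of the infinitesimal generator $(\cdot)_M$, which holds because $\Phi$ is free. Each of the three items then becomes a statement about the map $\xi$.

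For (i), we translate the one-cocycle identity of Lemma~\ref{lemma:residual_one_cocycle} into an identity for $\xi$. We need the standard equivariance of generators under the action, namely $\diff\Phi_{h^{-1}}\circ\eta_M\circ\Phi_h = (\Ad_{h^{-1}}\eta)_M$. This follows by differentiating $\Phi_{h^{-1}}\circ\Phi^{\Phi_h(x)} = \Phi^x\circ\Cn_{h^{-1}}$ at $e$. Applying it to the right-hand side of Lemma~\ref{lemma:residual_one_cocycle} gives
\begin{equation*}
(\xi_{(gh,u)})_M = (\xi_{(h,u)})_M + (\Ad_{h^{-1}}\xi_{(g,u)})_M ,
\end{equation*}
and injectivity of $(\cdot)_M$ then yields $\xi_{(gh,u)} = \xi_{(h,u)} + \Ad_{h^{-1}}\xi_{(g,u)}$. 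Left-translating by $gh$ gives
\begin{equation*}
w_u(gh) = \diff\Left_{gh}\xi_{(h,u)} + \diff\Left_{gh}\diff\Left_{h^{-1}}\diff\Right_{h}\xi_{(g,u)} = \diff\Left_g w_u(h) + \diff\Right_h w_u(g),
\end{equation*}
which is exactly \eqref{definition_of_group_linear_vector_field}. So $w_u\in\aut(G)$. The main obstacle I anticipate is the bookkeeping in this step: getting the $\Ad_{h^{-1}}$ (rather than $\Ad_h$) correct, and checking that left and right translations commute so the middle term collapses to $\diff\Right_h\circ\diff\Left_g\,\xi_{(g,u)}$. I would double-check the sign convention by testing a strongly invariant example, where every term vanishes, and a left-invariant example.

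For (ii), I use the injectivity of $(\cdot)_M$ together with the fact that $\diff\Left_g$ is a bijection $\mathfrak{g}\to\T_gG$. These give a chain of equivalences:
\begin{equation*}
w_{u_1}(g)=w_{u_2}(g) \iff \xi_{(g,u_1)}=\xi_{(g,u_2)} \iff \Delta_{(g,u_1)}=\Delta_{(g,u_2)} .
\end{equation*}
Quantifying over all $g$ and all $u_1,u_2$ then shows that $w$ is autonomous if and only if $\Delta$ is autonomous.

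For (iii), the same chain of equivalences gives
\begin{equation*}
w\equiv 0 \iff \xi\equiv 0 \iff \Delta\equiv 0 ,
\end{equation*}
and by Definition~\ref{dfn:residual} the last condition is precisely strong $\Phi$-invariance. Alternatively, setting $w_u(g)=0$ in \eqref{weak_invariance_of_a_control_system} reduces it directly to \eqref{eq:phi_invariance_commutative_square}.
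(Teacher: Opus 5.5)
Your proposal is correct. Items (ii) and (iii) match the paper's argument: both rest on $\diff\Left_{g^{-1}}\colon \T_gG\to\mathfrak g$ being a bijection and on injectivity of $(\cdot)_M$.

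For item (i) you take a somewhat different route. You push the residual's cocycle identity (Lemma~\ref{lemma:residual_one_cocycle}) down to the Lie algebra via $\diff\Phi_{h^{-1}}\circ\eta_M\circ\Phi_h=(\Ad_{h^{-1}}\eta)_M$. This gives $\xi_{(gh,u)}=\xi_{(h,u)}+\Ad_{h^{-1}}\xi_{(g,u)}$, which you then left-translate by $gh$. The paper never passes through $\Ad$. It expands the defining relation \eqref{weak_invariance_of_a_control_system} along $\Phi_{gh}=\Phi_g\circ\Phi_h$, and uses $\Phi_g\circ\Phi^x=\Phi^x\circ\Left_g$ and $\Phi^{\Phi_h(x)}=\Phi^x\circ\Right_h$ to write everything as $\diff\Phi^x$ applied to a vector in $\T_{gh}G$. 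It then cancels $\diff\Phi^x$, which is injective on $\T G$ because the action is free.

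The paper's computation is self-contained and works directly with whatever $w$ is given. Yours needs two extra ingredients. One is the equivariance of fundamental vector fields. The other is the observation that any $w$ satisfying \eqref{weak_invariance_of_a_control_system} must equal $\diff\Left_g\circ\xi_{(g,u)}$, by \eqref{eq:definition_of_w_from_residual} and injectivity of $(\cdot)_M$. This is easy, but you use it implicitly and should state it.

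In exchange, your route shows that group linearity of $w_u$ is exactly the left-trivialized form of the residual's cocycle property. It also yields the Lie-algebra identity \eqref{eq:weak_cascade_proof_xi_gh}, which the paper derives only later, in the converse direction, in the proof of Theorem~\ref{thm:weak_cascade}. Finally, it needs only injectivity of $(\cdot)_M$ into $\X(M)$, for which effectiveness suffices.

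Your aside in (iii), setting $w_u\equiv 0$ in \eqref{weak_invariance_of_a_control_system}, gives only the forward direction on its own. The converse again needs uniqueness of $w$, which your main chain of equivalences already supplies.
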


\begin{proof}
Let ${x \in M}$, ${u \in \mathbb{U}}$, and ${g, h \in G}$ be arbitrary.
Then, 
it follows from \eqref{weak_invariance_of_a_control_system} that
\begin{align*}
    \diff \Phi^x \circ w_u(gh) 
    &= f_u \circ \Phi_{gh}(x) - \diff \Phi_{gh} \circ f_u(x) \\
    &= f_u \circ \Phi_{g} \big( \Phi_h (x)\big) - \diff \Phi_{g} \circ \diff \Phi_h \circ f_u(x) \\
    &= \diff \Phi_g \circ f_u \big(\Phi_h (x) \big) + \diff \Phi^{\Phi_h (x) } \circ w_u(g)
    \\ &\hspace{1cm}
    - \diff \Phi_{g} \circ \big( f_u \circ \Phi_h(x) - \diff \Phi^x \circ w_u(h) \big) \\
    &= \diff \Phi^{\Phi_h (x) } \circ w_u(g)  + \diff \Phi_{g} \circ \diff \Phi^x \circ w_u(h) \\
    &= \diff \Phi^{x} \circ \big(  \diff \Right_h \circ w_u(g)  + \diff \Left_g \circ w_u(h) \big).
\end{align*}
Since $\diff \Phi^x : \T G \to \T M$ is injective for each $x \in M$, we have
\begin{align*}
    w_u(gh) = \diff \Right_h \circ w_u(g)  + \diff \Left_g \circ w_u(h),
\end{align*}
for all $g, h\in G$ and $u \in \mathbb{U}$, and therefore $w_u \in \aut(G)$ (proving the first item).
From Prop.~\ref{prop:weak-invariant_system} and \eqref{eq:infinitesimal_generator_group_linear_conversion}, it is clear that
        \begin{align}
        \Delta_{(g,u)} = \big(\diff \Left_{g^{-1}} \circ w_u(g)\big)_M.
        \label{eq:residual_as_infinitesimal_generator_with_W}
    \end{align}
The map ${\diff \Left_{g^{-1}} : \T G \to \T G}$ restricts to a linear isomorphism ${\T_g G \to \mathfrak{g}}$, and 
${(\cdot)_M : \mathfrak{g} \to \X(M)}$ restricts to a linear isomorphism ${\mathfrak{g} \to \mathfrak{g}(M)}$ (since $\Phi$ is effective). Thus,
${\Delta_{(g,u)} \in \mathfrak{g}(M)}$ is independent of $u$ if and only if ${w_u(g) \in \T_g G}$ is independent of $u$ (proving the second item), and
${\Delta_{(g,u)}}$ vanishes if and only if ${w_u(g)}$ vanishes (proving the third item).
\end{proof}

Thus, the control system ${w : \mathbb{U} \to \aut(G)}$ (with respect to which a given system is weakly $\Phi$-invariant) is in fact highly structured, which will play a key role in the results to come.

\subsection{Connections to Other Flavors of Symmetry}
\label{sec:symmetry_nomenclature}

This section discusses technical relationships between the notions of symmetry considered in this work and those studied in prior literature. We  also take the opportunity to disambiguate certain overloaded terminology for symmetry. 

This paper uses the term ``invariance'', rather than ``equivariance'', for its  consistency with the standard notion of a left-invariant vector field on a Lie group. 
The name comes from the fact that a strongly invariant vector field is a fixed (\textit{i.e.}, invariant) point of the (right) $G$-action ${\Upsilon : G \times \X(M) \to \X(M)}$ on the (infinite-dimensional) vector space $\X(M)$ given by ${\Upsilon : (g,f) \mapsto \diff \Phi_{g^{-1}} \circ f \circ \Phi_g}$. 
In contrast, a standard \textit{equivariant} map \cite[Ch.~7]{SmoothLee} is one where two distinct group actions act on the domain and codomain.
In other words, for each $u \in \mathbb{U}$, the map ${f_u : M \to \T M}$ may be called equivariant if there exist group actions ${\Phi : G \times M \to M}$ and ${\Psi : G \times \T M \to \T M}$ such that ${f_u \circ \Phi_g = \Psi_g \circ f_u}$ for all ${g \in G}$.
Since the map ${  \Psi : (g,v_x) \mapsto \diff \Phi_g(v_x)}$ is itself a $G$-action on $\T M$, strong invariance is a (very) special case of the more general property of equivariance for the map ${f_u : M \to \T M}$ (for each ${u \in \mathbb{U}}$).
However, the following result shows that very little structure is imposed by requiring the map ${f_u}$ to be equivariant in general.

\begin{proposition}\label{prop:many_systems_are_equivariant}
    Suppose ${f : \mathbb{U} \to \X(M)}$ has an autonomous residual $\Delta$
     with respect to ${\Phi : G \times M \to M}$, and define the map ${\Psi : G \times \T M \to \T M}$ given by%
    \begin{align}
        \Psi_g(v_x) = \diff \Phi_g \big(v_x + \Delta_{g} (x) \big).
        \label{eq:definition_of_general_equivariant_group_action}
    \end{align}
    Then $\Psi$ is a group action, and ${f_u \circ \Phi_g = \Psi_g \circ f_u}$ for all ${g \in G}$ and ${u \in \mathbb{U}}$. 
\end{proposition}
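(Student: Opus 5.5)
The plan is to verify the two claims directly from the definitions. The equivariance identity is an immediate unwinding of Definition~\ref{dfn:residual}. The action axioms then follow from the one-cocycle property of Lemma~\ref{lemma:residual_one_cocycle}, which holds for each fixed $u$ and hence for the autonomous residual $\Delta_g = \Delta_{(g,u)}$.

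First, I would check that $\Psi$ is well defined and smooth. For $v_x \in \T_x M$, the sum $v_x + \Delta_g(x)$ lies in $\T_x M$, so $\Psi_g(v_x) \in \T_{\Phi_g(x)} M$. The map $(g,x) \mapsto \Delta_g(x) = \diff\Phi_{g^{-1}} \circ f_u \circ \Phi_g(x) - f_u(x)$ is smooth, since $f$, $\Phi$ and group inversion are smooth. Hence $\Psi$ is smooth.

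Second, I would prove equivariance. Substituting the definition of the residual gives, for any $u \in \mathbb{U}$ and $x \in M$,
\begin{align*}
\Psi_g\big(f_u(x)\big) = \diff\Phi_g\big(f_u(x) + \diff\Phi_{g^{-1}} \circ f_u\circ\Phi_g(x) - f_u(x)\big) = f_u\circ\Phi_g(x).
\end{align*}
Here $\diff\Phi_g \circ \diff\Phi_{g^{-1}} = \id$ because $\Phi_g \circ \Phi_{g^{-1}} = \Phi_e = \id_M$.

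Third, I would verify the action axioms. The identity axiom holds because $\Delta_e = \diff\Phi_e \circ f_u \circ \Phi_e - f_u = 0$, so $\Psi_e = \diff\Phi_e = \id_{\T M}$.

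For composition, take $v_x \in \T_x M$ and note that $\Psi_h(v_x)$ is based at $\Phi_h(x)$. Therefore
\begin{align*}
\Psi_g\big(\Psi_h(v_x)\big) &= \diff\Phi_g\Big(\diff\Phi_h\big(v_x+\Delta_h(x)\big) + \Delta_g\big(\Phi_h(x)\big)\Big)\\
&= \diff\Phi_{gh}\Big(v_x + \Delta_h(x) + \diff\Phi_{h^{-1}}\circ\Delta_g\circ\Phi_h(x)\Big).
\end{align*}
This used $\diff\Phi_g \circ \diff\Phi_h = \diff\Phi_{gh}$ and $\diff\Phi_{gh} \circ \diff\Phi_{h^{-1}} = \diff\Phi_g$. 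By Lemma~\ref{lemma:residual_one_cocycle}, the bracketed residual terms equal $\Delta_{gh}(x)$, so $\Psi_g \circ \Psi_h = \Psi_{gh}$.

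The only delicate step is the base-point bookkeeping in the composition: the second residual must be evaluated at $\Phi_h(x)$ rather than at $x$. Once that is tracked, the cocycle identity closes the argument exactly.
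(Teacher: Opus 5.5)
Your proof is correct, but it handles the action axioms differently from the paper.

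The paper fixes an arbitrary $u$ and substitutes the definition of the residual once, up front. This rewrites the map as $\Psi_g(v_x) = \diff\Phi_g\big(v_x - f_u(x)\big) + f_u\circ\Phi_g(x)$. In this form $\Psi_g$ is the tangent lift $\diff\Phi_g$ conjugated by the fiberwise translation $v_x \mapsto v_x - f_u(x)$ of $\T M$. Smoothness is then evident, and $\Psi_e = \id$ and $\Psi_g\circ\Psi_h = \Psi_{gh}$ follow from $\diff\Phi_g\circ\diff\Phi_h = \diff\Phi_{gh}$ by a one-line cancellation of the $f_u\circ\Phi_h(x)$ terms. Equivariance drops out the same way, and Lemma~\ref{lemma:residual_one_cocycle} is never invoked.

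You instead keep $\Psi$ in residual form. You close the composition with the one-cocycle identity of Lemma~\ref{lemma:residual_one_cocycle}, correctly evaluating the second residual at the base point $\Phi_h(x)$. The two routes rest on the same telescoping cancellation; in yours, Lemma~\ref{lemma:residual_one_cocycle} is doing that work.

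Your version shows slightly more: $\diff\Phi_g\big(v_x + \Delta_g(x)\big)$ is an action for any family of vector fields $\Delta_g$ satisfying the cocycle identity, whether or not it arises from some $f_u$. The paper's version gives more structural insight. It shows $\Psi$ is conjugate to the standard tangent-lifted action, which is exactly why the paper uses this result to argue that equivariance of each $f_u$ imposes very little structure. Autonomy of the residual is what makes this conjugation work simultaneously for every $u$.
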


\begin{proof}
    From \eqref{eq:residual_definition},
    \begin{align*}
        \diff \Phi_g \big(v_x + \Delta_{(g,u)} (x) \big)
        = 
        \diff \Phi_g \big(v_x &- f_u(x)\big) + f_u \circ \Phi_g(x).
    \end{align*}
    Thus, choosing any arbitrary (fixed) ${u \in \mathbb{U}}$, $\Psi$ is clearly smooth (since $\Phi$ and $f$ are as well). Moreover,
    \begin{align*}
        \Psi_e (v_x) 
        = v_x - f_u(x) + f_u(x) = v_x,
    \end{align*}
    and for any $g, h \in G$, we compute
    \begin{align*}
        \Psi_g &\circ \Psi_h (v_x)
        \\ &= 
        \diff \Phi_g \big(
        \diff \Phi_h \big(v_x - f_u(x)\big) + f_u \circ \Phi_h(x)
        - f_u \circ \Phi_h(x) \big) 
        \\ & \quad \quad + f_u \circ \Phi_g \circ \Phi_h(x)
        \\ 
        &= 
        \diff \Phi_{gh} \big(v_x - f_u(x)\big)  + f_u \circ \Phi_{gh}(x) = \Psi_{gh}(v_x).
    \end{align*}
    It remains only to verify directly that
    \begin{equation*}
        \Psi_g \circ f_u(x) = \diff \Phi_g \big(f_u(x) - f_u(x)\big) + f_u \circ \Phi_g(x) = f_u \circ \Phi_g(x),
    \end{equation*}
    so ${f_u : M \to \T M}$ is 
    equivariant with respect to $\Phi$ and $\Psi$.
\end{proof}

Other literature has explored alternative notions of 
symmetry
that consider group actions on the \textit{input} space as well.
Recent work \cite{mahony2022ObserverDesignNonlinear} considers systems with transitive (rather than free and proper) group actions on the state space, defining equivariance of the system as equivariance of the map ${f : \mathbb{U} \to \X(M)}$ itself, where another $G$-action acts on the input space $\mathbb{U}$.
Early work on both strong \cite{Grizzle1985} and partial \cite{nijmeijer1985partial} symmetry (for free and proper group actions) also originally considered an action on the inputs.
However, they showed that if the projection ${\pi : M \to M/G}$ admits a global (resp. local) section, any system that is equivariant in such a sense is globally (resp. locally) feedback equivalent to another control system for which the action on the inputs is trivial (and can thus be omitted).
Since the present work omits an action on the inputs, and Prop.~\ref{prop:many_systems_are_equivariant} shows that even many weakly invariant systems are (perfectly) equivariant in another sense, the proposed notion of symmetry is thus termed ``weak invariance'' to better distinguish it from those other existing notions of ``equivariance''.

Finally, the definition of weak invariance in this paper considers control systems of the form ${f : \mathbb{U} \to \X(M)}$, whereas both strong \cite{Grizzle1985} and partial \cite{nijmeijer1985partial} symmetry were first studied within Brockett's more general ``fiber bundle picture'' of control systems \cite{brockett1977control}, in which the available inputs at a given state constitute the fiber over that state within another fiber bundle. 
When this other bundle is trivial, working instead with a map ${f : \mathbb{U} \to \X(M)}$ incurs no additional loss of generality. 
Thus, our notions of symmetry are equally general (modulo feedback equivalence) whenever the control inputs can be expressed in a state-independent manner and ${\pi : M \to M/G}$ is trivial.\footnote{Of course, local analogs of our results can also be readily obtained in the more general setting, since fiber bundles are locally trivial by definition.}

\section{Symmetry and System Structure}
\label{sec:system_structure}

This section characterizes the implications of increasingly rigid notions of symmetry on the structure of a control system ${f : \mathbb{U} \to \X(M)}$, by gradually strengthening the flavor of symmetry exhibited by $f$ with respect to ${\Phi : G \times M \to M}$.

\subsection{Dynamics in a Trivialization}

Given a global section ${s : M/G\to M}$ of the principal bundle ${\pi : M \to M/G}$, it is illustrative to analyze the dynamics of the control system ${f : \mathbb{U} \to \X(M)}$ in the corresponding global trivialization ${M \simeq M/G \times G}$.
In doing so, the system's evolution is broken down in terms of two interconnected control systems evolving in $G$ and  $M/G$ respectively.
After deriving the general form of this decomposition (even in the total absence of symmetry), these results will be refined under gradually stronger symmetry assumptions.

\begin{lemma}[System Decomposition]\label{thm:feedback_interconnection}
Let ${s : M/G \to M}$ be a global section of ${\pi : M \to M/G}$ and let ${x(t) \in M}$ be a solution of ${f : \mathbb{U} \to \X(M)}$ with input ${u(t) \in \mathbb{U}}$. 
Then, the decomposed states ${y(t) \in M/G}$ and ${g(t) \in G}$, defined by ${x = \Phi_g \circ s(y)}$, are governed by the feedback interconnection
\begin{subequations}
\begin{align}
    \dot y  = f^{M\hspace{-1pt}/G}_{(g,u)}(y) &:= \diff\pi\circ (f_u+\Delta_{(g,u)})\circ s(y),
    \label{eq:feedback_interconnection_ydot_partial}
    \\
    \dot g = f^{\hspace{.5pt}G}_{(y,u)}(g) &:= \diff\Left_g\circ\mathcal{A}\circ (f_u+\Delta_{(g,u)})\circ s(y),
    \label{eq:feedback_interconnection_gdot_partial}
\end{align}
\end{subequations}
where ${\mathcal{A}: \T M \to \mathfrak{g}} $ is the principal connection one-form of the canonical flat connection induced by the section $s$.
\end{lemma}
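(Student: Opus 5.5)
The plan is to differentiate the defining relation $x(t) = \Phi_{g(t)} \circ s(y(t))$ in time and then pull the resulting vector back to the section by applying $\diff\Phi_{g^{-1}}$. The residual then appears naturally.

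Note first that $y = \pi(x)$ and $g$ are smooth, because the trivialization $(y,g)\mapsto \Phi_g\circ s(y)$ is a diffeomorphism $M/G\times G\to M$. Write $z := s(y)$. The product rule for the map $\Phi$ gives
\begin{equation*}
\dot x = \diff\Phi^{z}(\dot g) + \diff\Phi_g \circ \diff s(\dot y).
\end{equation*}
Using the identity $\Phi^{\Phi_g(z)}\circ \Right_g = \Phi_g\circ\Phi^{z}$ (both send $h$ to $\Phi_{hg}(z)$), or equivalently $\Phi_g \circ \Phi^z = \Phi^z\circ \Left_g$ applied after writing $\dot g = \diff\Left_g(\eta)$ with $\eta := \diff\Left_{g^{-1}}\dot g\in\mathfrak{g}$, the first term becomes $\diff\Phi_g\circ\diff\Phi^{z}(\eta)$. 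Applying $\diff\Phi_{g^{-1}}$ to both sides, and using $\dot x = f_u(\Phi_g(z))$, we obtain
\begin{equation*}
\diff\Phi_{g^{-1}}\circ f_u\circ\Phi_g(z) = \diff\Phi^{z}(\eta) + \diff s(\dot y).
\end{equation*}
By Definition~\ref{dfn:residual}, the left-hand side equals $(f_u+\Delta_{(g,u)})(z)$.

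Next we split this identity into its two components. Apply $\diff\pi$ to both sides. The term $\diff\Phi^{z}(\eta)$ is vertical and so is annihilated, while $\diff\pi\circ\diff s = \id$ because $\pi\circ s=\id_{M/G}$. This yields $\dot y = \diff\pi\circ(f_u+\Delta_{(g,u)})\circ s(y)$, which is \eqref{eq:feedback_interconnection_ydot_partial}.

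For \eqref{eq:feedback_interconnection_gdot_partial}, apply $\mathcal{A}$ at the point $z = \Phi_e\circ s(y)$. Using \eqref{eq:define_principal_connection_one_form_flat_connection} with the group element $e$, we get $\mathcal{A}_z(v) = (\diff\Phi^z)^{-1}(v - \diff s\circ\diff\pi(v))$. Substituting $v = \diff\Phi^z(\eta)+\diff s(\dot y)$, the quantity $v - \diff s\circ\diff\pi(v)$ reduces to $\diff\Phi^z(\eta)$, because $\diff\pi$ kills the vertical part and fixes $\dot y$. Hence $\mathcal{A}(v)=\eta$, and therefore $\dot g = \diff\Left_g\,\eta = \diff\Left_g\circ\mathcal{A}\circ(f_u+\Delta_{(g,u)})\circ s(y)$.

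The main point needing care is bookkeeping rather than any deep obstacle. One issue is the correct differentiation of $\Phi_g\circ s$ with respect to $g$, to ensure the left-trivialized velocity $\diff\Left_{g^{-1}}\dot g$ (and not the right-trivialized one) is what appears after pulling back by $\diff\Phi_{g^{-1}}$. The other is evaluating the flat connection at a point lying on the section, where the group element in \eqref{eq:define_principal_connection_one_form_flat_connection} is the identity.
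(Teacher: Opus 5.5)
Your proposal is correct and follows essentially the paper's route (chain rule on $x=\Phi_g\circ s(y)$, projection by $\diff\pi$, and recovery of $\diff\Left_{g^{-1}}\dot g$ via the flat connection after pulling back by $\diff\Phi_{g^{-1}}$). Pulling back to the section point first, as you do, only spares the paper's separate appeals to $\pi\circ\Phi_{g^{-1}}=\pi$ and to the equivariance of $\ver$. One small slip: the parenthetical identity $\Phi^{\Phi_g(z)}\circ\Right_g=\Phi_g\circ\Phi^z$ is false, since the two sides send $h$ to $\Phi_{hg^2}(z)$ and $\Phi_{gh}(z)$ respectively. It is not needed, because the identity $\Phi_g\circ\Phi^z=\Phi^z\circ\Left_g$ that you also cite is correct and suffices.
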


\begin{proof}
Substituting the decomposition ${{x} = \Phi_g \circ s(y)}$ into the dynamics ${\dot{x} = f_u(x)}$ and applying the chain rule, we have
    \begin{align}
        \diff\Phi^{s(y)}(\dot g) + \diff\Phi_g\circ \diff s(\dot y) \label{eq:cascade_decomp_nosym}
        &= 
        f_u \circ \Phi_g \circ s(y).
    \end{align}
    Projecting the previous equation to $\T(M/G)$ via ${\diff \pi}$ annihilates the first term on the left 
    (since it is tangent to the orbits), while the second term projects to $\dot{y}$, yielding
        \begin{align}
    \dot y &= 
            \diff \pi \circ f_u \circ \Phi_g \circ s(y).
    \label{eqn:verification_of_y_dot_partial}
    \end{align}
    To obtain \eqref{eq:feedback_interconnection_ydot_partial}, it thus suffices to observe 
    from \eqref{eq:residual_definition}
    that
    \begin{align}
        \diff \Phi_{g^{-1}} \circ f_u \circ \Phi_g = f_u + \Delta_{(g,u)} \label{eq:rewrite_pullback_using_residual}
    \end{align}
    and recall that ${\pi \circ \Phi_{g^{-1}} = \pi}$.
    To extract the group dynamics, we 
    insert \eqref{eqn:verification_of_y_dot_partial} back into \eqref{eq:cascade_decomp_nosym} and rearrange terms, yielding
    \begin{align*}
                \diff\Phi^{s(y)}(\dot g)  
        &= 
        f_u \circ \Phi_g \circ s(y) - \diff\Phi_g\circ \diff s \circ 
                \diff\pi \big(  f_u \circ \Phi_g \circ s(y)\big)
        \\ 
        &= \ver \big( f_u \circ \Phi_g \circ s(y) \big),
    \end{align*}
    where the second equality follows from
    \eqref{eq:define_vertical_projection_flat_connection}. %
    Acting on both sides via $\diff \Phi_{g^{-1}}$, we use the identities ${\Phi_h \circ \Phi^q = \Phi^q \circ \Left_h}$ and 
    ${\diff \Phi_h \circ \ver = \ver \circ \diff \Phi_h}$
    to obtain
    \begin{align*}
        \diff\Phi^{s(y)} \big( \diff \Left_{g^{-1}}(\dot g)  \big)
        &= \ver \big( \diff \Phi_{g^{-1}} \circ f_u \circ \Phi_g \circ s(y) \big).
    \end{align*}
    In view of 
    \eqref{eq:define_principal_connection_one_form_flat_connection} %
    and \eqref{eq:rewrite_pullback_using_residual}, we conclude that
    \begin{align*}
        \diff \Left_{g^{-1}}(\dot g)  
        &= \mathcal{A} \circ \big( f_u + \Delta_{(g,u)} \big)  \circ s(y),
    \end{align*}
    and so the result follows by applying $\diff \Left_g$ to both sides.
\end{proof}

Although no genuine structural simplification of the dynamics has yet been achieved (as expected, given the absence of any symmetry assumptions), this technical result isolates the role of the residual in the decomposed dynamics of arbitrary control systems, facilitating the refinement of this decomposition under gradually stronger symmetry assumptions.

\subsection{Cascade Decompositions}
\label{sec:cascade_decompositions}

\begin{figure}[t]
    \centering

    {\textsf{\textsc{{Partial Symmetry}}}}%
    \vspace{6pt}
    
    \begin{tikzpicture}

\filldraw[thick, fill=ieee_blue!10, draw=ieee_blue!50, very thin]
    (2.6,-.7) 
    rectangle
    (-.8,1);
    
    \node[yshift=.25cm] (top_of_f) at (2.7*.5-.9*.5,1) { 
        \footnotesize 
        \textsf{control system on $M/G$}
        };

\filldraw[thick, fill=ToolOrange!8, draw=ToolOrange!50,very  thin]
    (6.8,-.7) 
    rectangle
    (3.5,1);

    \node[yshift=.25cm] (top_of_g) at (6.8*.5 + 3.5*.5 ,1) { 
        \footnotesize 
        \textsf{control system on $G$}
        };

\node[block] (f) {$\tilde{f}$};
\node[integrator, right=.8cm of f] (inty) {$\int$};

\node[block, right=3.5cm of f, ] (v) {$f^G$};

\node[integrator, right=.8cm of v] (intg) {$\int$};

\draw[arrow]
    ($(f.west)+(-1.35,0)$)
    -- node[above] {$u$} 
    ($(f.west)+(-.8,0)$)
    coordinate(usplice)
    node[splice]
    |- (f.west);

\draw[arrow]
	(usplice)
    |- 
    ($(v)+(0,-1)$)
    coordinate(uspliceagain)
    --
    (v.south);

\draw[arrow]
    (f.east)
    -- node[above] {$\dot y$} (inty.west);

\draw[arrow]
    (inty.east)
    -- 
        node[above] {$y$}
    ++(.5,0)
    coordinate (ysplice)
    node [splice]
    -- ++(0,0.65)
    -- ($(f) + (0cm, 0.65cm)$)
	-- (f.north)
    ;

\draw[arrow]
    (ysplice)
    --
        node[above, pos=.5] {$y$}
    (v.west);

\draw[arrow]
    (v.east)
    -- node[above, pos=.4] {$\dot g$} (intg.west);

\draw[arrow]
    (intg.east)
    -- 
	node[above] {$g$}
	++(.5,0)
    coordinate (gout)
    |- ($(v)+(0,0.65)$)
    -- (v.north);

\end{tikzpicture}
    \caption{A partially $\Phi$-invariant system admits a cascade decomposition in which ${y \in M/G}$ evolves independently of ${g \in G}$. However, in general, the subsystem dynamics
     ${\tilde{f} : \mathbb{U} \to \X(M/G)}$ and ${f^{\hspace{.5pt}G} : M/G \times \mathbb{U} \to \X(G)}$ enjoy no special structure of their own.}
    \label{fig:block_diagram_partial_symmetry}
\end{figure}

It was first shown in \cite{nijmeijer1985partial} that when $\Phi$ is a partial symmetry, 
the feedback interconnection 
\eqref{eq:feedback_interconnection_ydot_partial}-\eqref{eq:feedback_interconnection_gdot_partial}
degenerates into a cascade in which the dynamics on
$M/G$
evolve independently. In the following theorem, we sharpen that result, showing that partial symmetry is also \textit{necessary} for such a decomposition.

\begin{theorem}[Partial Symmetry, {\textit{cf.}}{\cite{nijmeijer1985partial}}]\label{thm:partial_cascade}
Let all assumptions of Lemma~\ref{thm:feedback_interconnection} hold. Then, 
$\Phi$ is a partial symmetry of $f$
if and only if
 \eqref{eq:feedback_interconnection_ydot_partial}-\eqref{eq:feedback_interconnection_gdot_partial} is a cascade in which 
 \eqref{eq:feedback_interconnection_ydot_partial}
 evolves independently.
In particular, 
the subsystems then evolve according to%
\begin{subequations}
\begin{align}
    \dot{y} &= \tilde{f}_u(y), \label{eq:cascade_ydot_partial}\\
   \dot{g} &= \diff\Left_g\circ\mathcal{A}\circ (f_u+\Delta_{(g,u)})\circ s(y), \label{eq:cascade_gdot_partial}
\end{align}
\end{subequations}
where ${\tilde{f}: \mathbb{U} \to \mathfrak{X}(M/G)}$ is the unique control system 
defined (independent of the choice of $s$) such that for all $u \in \mathbb{U}$,
    \begin{align}
        \label{define_projected_control_system}
        \tilde{f}_u \circ \pi = \diff \pi \circ f_u.
    \end{align}
\end{theorem}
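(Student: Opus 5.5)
The argument rests on the observation that $\diff\pi$ annihilates exactly the vertical bundle, so the $g$-dependence of \eqref{eq:feedback_interconnection_ydot_partial} is controlled entirely by $\diff\pi\circ\Delta_{(g,u)}\circ s$.

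\emph{Sufficiency.} Suppose $\Phi$ is a partial symmetry. Then $\Delta_{(g,u)}(x)\in\V_xM=\ker\diff\pi$ for every $x$. Hence \eqref{eq:feedback_interconnection_ydot_partial} reduces to $\dot y=\diff\pi\circ f_u\circ s(y)$, which does not depend on $g$.

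To identify this with $\tilde f_u$, I will first show that $\diff\pi\circ f_u$ is constant on orbits. Applying $\diff\pi$ to \eqref{eq:rewrite_pullback_using_residual} and using $\pi\circ\Phi_{g^{-1}}=\pi$ gives
\begin{equation*}
\diff\pi\circ f_u\circ\Phi_g=\diff\pi\circ(f_u+\Delta_{(g,u)})=\diff\pi\circ f_u .
\end{equation*}
Since $\pi$ is a surjective submersion, a smooth $G$-invariant map $M\to\T(M/G)$ descends uniquely to a smooth map on $M/G$. This defines $\tilde f_u$ through \eqref{define_projected_control_system}, with no reference to $s$. Joint smoothness in $(y,u)$ follows by the same argument applied to $(x,u)\mapsto \diff\pi\circ f_u(x)$, using the submersion $\pi\times\id_{\mathbb U}$. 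Evaluating \eqref{define_projected_control_system} at $s(y)$ and using $\pi\circ s=\id$ then gives $\tilde f_u(y)=\diff\pi\circ f_u\circ s(y)$, which is \eqref{eq:cascade_ydot_partial}. Equation \eqref{eq:cascade_gdot_partial} is unchanged from Lemma~\ref{thm:feedback_interconnection}.

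\emph{Necessity.} Suppose $f^{M/G}_{(g,u)}$ is independent of $g$ for every $y$ and $u$. Comparing with its value at $g=e$, where $\Delta_{(e,u)}=0$, yields $\diff\pi\circ\Delta_{(g,u)}\circ s(y)=0$ for all $y,g,u$. This only shows that $\Delta_{(g,u)}$ is vertical along the image of the section. Extending it to all of $M$ is the step that needs care, and I will use the cocycle identity of Lemma~\ref{lemma:residual_one_cocycle}.

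Every $x\in M$ can be written as $x=\Phi_h\circ s(y)$. Lemma~\ref{lemma:residual_one_cocycle} with the pair $(g,h)$ gives $\Delta_{(g,u)}\circ\Phi_h=\diff\Phi_h\circ\bigl(\Delta_{(gh,u)}-\Delta_{(h,u)}\bigr)$. Evaluating at $s(y)$ and applying $\diff\pi$, using $\diff\pi\circ\diff\Phi_h=\diff\pi$, gives
\begin{equation*}
\diff\pi\circ\Delta_{(g,u)}(x)=\diff\pi\circ\Delta_{(gh,u)}(s(y))-\diff\pi\circ\Delta_{(h,u)}(s(y)).
\end{equation*}
Both terms on the right vanish by the previous step. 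Therefore $\Delta_{(g,u)}(x)\in\ker\diff\pi=\V_xM$ for all $x$, so $\Delta_{(g,u)}\in\mathfrak{vert}(M)$ and $\Phi$ is a partial symmetry.

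I also need to fix what "cascade in which \eqref{eq:feedback_interconnection_ydot_partial} evolves independently" means. I take it to mean that the vector field $f^{M/G}_{(g,u)}$ does not depend on $g$, which is exactly the hypothesis used in the necessity direction.
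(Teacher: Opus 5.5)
Your proof is correct and follows essentially the same route as the paper: sufficiency by showing that $\diff\pi\circ f_u$ is constant on orbits, so that $\tilde f_u$ is well defined and equals $\diff\pi\circ f_u\circ s$; necessity by comparing with $g=e$ to get verticality of $\Delta$ along $s(M/G)$, then extending to all of $M$ with the cocycle identity of Lemma~\ref{lemma:residual_one_cocycle}. The only minor difference is that you obtain smoothness of $\tilde f$ by descent through the submersion $\pi\times\id_{\mathbb{U}}$, whereas the paper reads it off directly from the formula $\tilde f_u(y)=\diff\pi\circ f_u\circ s(y)$.
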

\begin{proof}
For sufficiency, assume that 
$\Phi$ is a partial symmetry of $f$. 
To show that \eqref{define_projected_control_system} defines a unique map ${\tilde{f} : \mathbb{U} \to \X(M/G)}$, 
it suffices to check that 
${\diff \pi \circ f_u(x_1) = \diff \pi \circ f_u(x_2)}$ whenever ${x_1 = \Phi_g (x_2)}$ for some ${g \in G}$. We compute
\begin{align*}
    \diff \pi \circ f_u(x_1)&=
    \diff \pi \circ f_u \circ \Phi_g (x_2)
    \\&=    \diff \pi \circ \diff \Phi_{g^{-1}} \circ f_u \circ \Phi_g (x_2)
    \\&=    \diff \pi \circ (f_u + \Delta_{(g,u)}) (x_2)
    =    \diff \pi \circ f_u (x_2),
\end{align*}
    since $\Phi$ is a partial symmetry of $f$ by assumption, and thus ${\Delta_{(g,u)} \in \mathfrak{vert}(M)}$ by definition. 
To verify that $\tilde{f}$ is smooth, it suffices to apply both sides of \eqref{define_projected_control_system} to $s(y)$, yielding
\begin{equation}
    {\tilde{f}_u(y) = \diff \pi \circ f_u \circ s(y)}.
\label{eq:explicit_f_tilde_in_terms_of_section}
\end{equation}
Since $\Delta$ takes values in $\mathfrak{vert}(M)$,  this last result and \eqref{eq:feedback_interconnection_ydot_partial} imply that $\dot{y} = \tilde{f}_u(y)$, as required.

For necessity, suppose that the feedback interconnection \eqref{eq:feedback_interconnection_ydot_partial}-\eqref{eq:feedback_interconnection_gdot_partial} degenerates into a cascade in which \eqref{eq:feedback_interconnection_ydot_partial} evolves independently.
Then, for any point ${q \in s(M/G)}$,  we have
\begin{align}
    \diff \pi \circ \Delta_{(g,u)}(q) =     \diff \pi \circ \Delta_{(e,u)}(q) =
    0_{\pi(q)},
\label{residual_is_vertical_along_the_section}
\end{align}
for all ${g \in G}$ and  ${u \in \mathbb{U}}$, 
since ${\Delta_{(e,u)} = f_u - f_u = 0}$. It remains to show that 
the left-most side of \eqref{residual_is_vertical_along_the_section}
also vanishes even when ${q \not\in s(M/G)}$. 
Noting that for any ${x \in M}$, there exists ${h \in G}$ and ${q \in s(M/G)}$ such that ${x = \Phi_h(q)}$, 
we 
compute
\begin{align*}
    \diff \pi  \circ \Delta_{(g,u)}(x) 
    &=
    \diff \pi \big( \diff \Phi_{h^{-1}} \circ \Delta_{(g,u)} \circ \Phi_h(q)\big)
    \\
    &=
    \diff \pi \big(
    \Delta_{(gh,u)}(q)
    -
    \Delta_{(h,u)}(q)  \big)
    \\
    &=
    \diff \pi \circ 
    \Delta_{(gh,u)}(q)
    -
    \diff \pi \circ \Delta_{(h,u)}
    (q) 
    = 0_{\pi(x)},
\end{align*}
where the second equality follows from Lemma~\ref{lemma:residual_one_cocycle} and 
the third equality follows from \eqref{residual_is_vertical_along_the_section}.
Since $x$, $g$, and $u$ were taken to be arbitrary, 
it follows that 
${\Delta_{(g,u)} \in \mathfrak{vert}(M)}$, 
as required.
\end{proof}

Thus, partial symmetry (or lack thereof) completely determines whether or not a system admits a cascade decomposition of the form shown in Fig.~\ref{fig:block_diagram_partial_symmetry}.
While a cascade structure can itself be useful for design and analysis, the above decomposition enforces no structure whatsoever in the group dynamics \eqref{eq:cascade_gdot_partial}---in particular, given any (``desired'') control system ${f^{\hspace{.5pt}G} : M/G \times \mathbb{U} \to \X(G)}$, there exists a residual ${\Delta : G \times \mathbb{U} \to \mathfrak{vert}(M)}$ such that $f^{\hspace{.5pt}G}(g)$ equals the right-hand side of \eqref{eq:cascade_gdot_partial}.
In contrast, the following result  (a novel contribution of this work) shows that  a weakly invariant system enjoys a very structured group subsystem $f^{\hspace{.5pt}G}$, which takes the form shown in Fig.~\ref{fig:block_diagram_weak_symmetry}. 

\begin{figure}[t]
    \centering

    {\textsf{\textsc{{Weak Symmetry}}}}%
    \vspace{6pt}
    
    \begin{tikzpicture}

\filldraw[thick, fill=ieee_blue!10, draw=ieee_blue!50, very thin]
    (2.6,-.7) 
    rectangle
    (-.8,1);
    
    \node[yshift=.25cm] (top_of_f) at (2.7*.5-.9*.5,1) { 
        \footnotesize 
        \textsf{control system on $M/G$}
        };

\filldraw[thick, fill=ToolOrange!8, draw=ToolOrange!50,very  thin]
    (7,-3) 
    rectangle
    (3.5,1);
        \node[yshift=.25cm] (top_of_g) at (7*.5 + 3.5*.5 ,1) { 
        \footnotesize 
        \textsf{group affine system on $G$}
        };

\node[block] (f) {$\tilde{f}$};
\node[integrator, right=.8cm of f] (inty) {$\int$};

\node[block, right=3.5cm of f, ] (v) {$v$};
\node[block, right=3.5cm of f, yshift=-2cm] (w) {$w$};

\node[sum, right=.1cm of v, yshift=-1cm] (sum) {};
\node[integrator, right=.4cm of sum] (intg) {$\int$};

\draw[arrow]
    ($(f.west)+(-1.35,0)$)
    -- node[above] {$u$} 
    ($(f.west)+(-.8,0)$)
    coordinate(usplice)
    node[splice]
    |- (f.west);

\draw[arrow]
	(usplice)
    |- 
    ($(v)+(0,-1)$)
    coordinate(uspliceagain)
    node[splice]
    --
    (v.south);
\draw[arrow]
	(uspliceagain)
    --
    (w.north);

\draw[arrow]
    (f.east)
    -- node[above] {$\dot y$} (inty.west);

\draw[arrow]
    (inty.east)
    -- 
        node[above] {$y$}
    ++(.5,0)
    coordinate (ysplice)
    node [splice]
    -- ++(0,0.65)
    -- ($(f) + (0cm, 0.65cm)$)
	-- (f.north)
    ;

\draw[arrow]
    (ysplice)
    --
        node[above, pos=.5] {$y$}
    (v.west);
\draw[arrow]
    (v.east)
    -| (sum.north);

\draw[arrow]
    (w.east)
    -| (sum.south);

\draw[arrow]
    (sum.east)
    -- node[above, pos=.4] {$\dot g$} (intg.west);

\draw[arrow]
    (intg.east)
    -- 
	node[above] {$g$}
	++(.5,0)
    coordinate (gout)
    node [splice]
    |- ($(v)+(0,0.65)$)
    -- (v.north);

\draw[arrow]
    (gout)
    |- ($(w)+(0,-0.65)$)
    -- (w.south);

\end{tikzpicture}
    \vspace{-12pt}
    \caption{A weakly $\Phi$-invariant system admits a cascade decomposition in which the dynamics of ${g \in G}$ consist of a group linear term ${w : \mathbb{U} \to \aut(G)}$ (independent of ${y \in M/G}$) and  a left-invariant term ${v : M/G \times \mathbb{U} \to \mathfrak{left}(G)}$. The dynamics on $G$ are thus group affine.}
    \label{fig:block_diagram_weak_symmetry}
\end{figure}

\begin{theorem}[Weak Symmetry]\label{thm:weak_cascade}
    Let all assumptions of Lemma~\ref{thm:feedback_interconnection} hold.
    Then, $\Phi$ is a weak symmetry of $f$ if and only if
$f^{M/G}$ in \eqref{eq:feedback_interconnection_ydot_partial} is independent of $g$ and 
$f^G$ in \eqref{eq:feedback_interconnection_gdot_partial}
is a group affine control system whose group linear term is independent of $y$.
In particular, 
the subsystems then evolve according to
    \begin{subequations}
        \begin{align}
            \dot{y} &= \tilde{f}_u(y), \label{eq:cascade_ydot_weak}\\
            \dot{g} &= w_u(g) + v_{(y,u)}(g),
            \label{eq:cascade_gdot_weak}
        \end{align}
    \end{subequations}
    where 
    $\tilde{f}$
    is defined by \eqref{define_projected_control_system},
    $f$ is weakly $\Phi$-invariant with respect to ${w : \mathbb{U} \to \aut(G)}$,     and 
    ${v : M/G \times \mathbb{U} \to \mathfrak{left}(G)}$ 
    is given by
    \begin{equation}
        {v_{(y,u)}(g) = \diff \Left_g \circ \mathcal{A}\circ f_u \circ s(y)}.
        \label{eq:left_invariant_component_along_orbits}
    \end{equation}
\end{theorem}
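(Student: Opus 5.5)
We prove the two directions separately, using Lemma~\ref{thm:feedback_interconnection} and Theorem~\ref{thm:partial_cascade} throughout.

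\emph{Sufficiency.} Suppose $\Phi$ is a weak symmetry of $f$. Since $\mathfrak{g}(M)\subseteq\mathfrak{vert}(M)$, $\Phi$ is also a partial symmetry. Theorem~\ref{thm:partial_cascade} then shows that $f^{M/G}$ is independent of $g$ and equals $\tilde{f}$, which gives \eqref{eq:cascade_ydot_weak}.

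For the group subsystem, we use linearity of $\mathcal{A}$ on each fiber to split \eqref{eq:feedback_interconnection_gdot_partial} into two pieces, $\diff\Left_g\circ\mathcal{A}\circ f_u\circ s(y)$ and $\diff\Left_g\circ\mathcal{A}\circ\Delta_{(g,u)}\circ s(y)$. The first piece is exactly $v_{(y,u)}(g)$ in \eqref{eq:left_invariant_component_along_orbits}. It is left-invariant in $g$, because $\mathcal{A}\circ f_u\circ s(y)\in\mathfrak{g}$ does not depend on $g$.

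For the second piece, \eqref{eq:residual_as_infinitesimal_generator_with_W} gives $\Delta_{(g,u)}(q)=\diff\Phi^q\big(\diff\Left_{g^{-1}}\circ w_u(g)\big)$. This vector is vertical, and the connection maps generators to their Lie algebra elements, so $\mathcal{A}\circ\Delta_{(g,u)}(q)=\diff\Left_{g^{-1}}\circ w_u(g)$. Hence the second piece equals $w_u(g)$, which is independent of $y$.

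By Prop.~\ref{prop:properties_of_symmetry_group_control_system}(i), $w_u\in\aut(G)$. Fact~\ref{fact:group_affine_group_linear_decomposition} then shows that $f^G_{(y,u)}=w_u+v_{(y,u)}$ is group affine with group linear term $w_u$, which gives \eqref{eq:cascade_gdot_weak}.

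\emph{Necessity.} Suppose $f^{M/G}$ is independent of $g$ and $f^G_{(y,u)}=w_u+v_{(y,u)}$ is group affine, with group linear part $w_u\in\aut(G)$ independent of $y$. Theorem~\ref{thm:partial_cascade} shows $\Delta_{(g,u)}$ is vertical. For $q=s(y)$ we can therefore write $\Delta_{(g,u)}(q)=\diff\Phi^q\big(\mathcal{A}\circ\Delta_{(g,u)}(q)\big)$.

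Fact~\ref{fact:group_affine_group_linear_decomposition} identifies $v_{(y,u)}(g)=\diff\Left_g\circ f^G_{(y,u)}(e)$. Since $\Delta_{(e,u)}=0$, this equals $\diff\Left_g\circ\mathcal{A}\circ f_u\circ s(y)$. Subtracting from \eqref{eq:feedback_interconnection_gdot_partial} leaves $\mathcal{A}\circ\Delta_{(g,u)}\circ s(y)=\diff\Left_{g^{-1}}\circ w_u(g)=:\xi_{(g,u)}$, which is independent of $y$. So along the section, $\Delta_{(g,u)}$ agrees with the generator $(\xi_{(g,u)})_M$.

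\emph{Main obstacle.} The remaining step is to extend this identity off the section. We do this with the cocycle identity, as in the proof of Theorem~\ref{thm:partial_cascade}. Write an arbitrary point as $x=\Phi_h(q)$ with $q=s(y)$. Lemma~\ref{lemma:residual_one_cocycle} gives
\[
\Delta_{(g,u)}(x)=\diff\Phi_h\big(\Delta_{(gh,u)}(q)-\Delta_{(h,u)}(q)\big)=\diff\Phi_h\circ\diff\Phi^q\big(\xi_{(gh,u)}-\xi_{(h,u)}\big).
\]
Because $w_u$ is group linear, \eqref{definition_of_group_linear_vector_field} gives $\xi_{(gh,u)}=\Ad_{h^{-1}}\xi_{(g,u)}+\xi_{(h,u)}$. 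Substituting, the bracket becomes $\Ad_{h^{-1}}\xi_{(g,u)}$. Finally, $\diff\Phi_h\circ\diff\Phi^q\circ\Ad_{h^{-1}}=\diff\Phi^{\Phi_h(q)}$, obtained by differentiating $\Phi_h\circ\Phi^q\circ\Cn_{h^{-1}}=\Phi^{x}$. Therefore $\Delta_{(g,u)}(x)=(\xi_{(g,u)})_M(x)$ for every $x\in M$, so $\Delta_{(g,u)}\in\mathfrak{g}(M)$ and $\Phi$ is a weak symmetry of $f$.

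The delicate point is this use of group linearity to reconcile the conjugation factors.
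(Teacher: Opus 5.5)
Your proposal is correct and follows the paper's proof essentially step for step: the partial-symmetry cascade, splitting $\mathcal{A}$ linearly, identifying $v$ by evaluation at $e$, and extending off the section using the residual cocycle together with the $\Ad$-cocycle for $\xi$ that comes from group linearity. The one small difference is at the end. You conclude directly from the identity $\diff\Phi_h\circ\diff\Phi^q\circ\Ad_{h^{-1}}=\diff\Phi^{\Phi_h(q)}$, whereas the paper applies $\mathcal{A}$, uses its $\Ad$-equivariance, and then invokes verticality to recover $\Delta_{(g,u)}=(\xi_{(g,u)})_M$.
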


\begin{proof}
For necessity, assume that $\Phi$ is a weak symmetry of $f$ with respect to ${w : \mathbb{U}\to \aut(G)}$. Then $\Phi$ is also a partial symmetry of $f$, and Theorem~\ref{thm:partial_cascade} implies that \eqref{eq:feedback_interconnection_ydot_partial}-\eqref{eq:feedback_interconnection_gdot_partial} is a cascade with ${\dot y = \tilde f_u(y)}$ (\textit{i.e.}, \eqref{eq:feedback_interconnection_ydot_partial} is independent of $g$).
Moreover, 
from Prop.~\ref{prop:weak-invariant_system},
there exists a smooth map ${\xi:G\times \mathbb{U}\to \mathfrak{g}}$ such that ${\Delta_{(g,u)} = (\xi_{(g,u)})_M}$. Substituting this into \eqref{eq:feedback_interconnection_gdot_partial}, we compute
\begin{align*}
    \dot{g} 
    &=\diff\Left_g\circ\mathcal{A}\circ \big(f_u+ (\xi_{(g,u)})_M\big)\circ s(y)\\
    &=\diff\Left_g\circ \mathcal{A}\circ f_u \circ s(y) + \diff \Left_g \circ \xi_{(g,u)}
\end{align*}
where the second equality follows from the fact that ${\mathcal{A} \circ \xi_M(x)= \xi}$ for all ${\xi \in \mathfrak{g}}$ and ${x \in M}$ (a defining property of principal connection one-forms).
Defining  $w$ as in \eqref{eq:infinitesimal_generator_group_linear_conversion}  and $v$ as above thus yields precisely \eqref{eq:cascade_gdot_weak}. Since the first term of \eqref{eq:cascade_gdot_weak} is group linear (from Prop.~\ref{prop:properties_of_symmetry_group_control_system}) and
the second term of \eqref{eq:cascade_gdot_weak} is left-invariant, \eqref{eq:feedback_interconnection_gdot_partial} 
is a group affine control system by Fact~\ref{fact:group_affine_group_linear_decomposition} (and in particular, $w_u(g)$ is independent of $y$).

For sufficiency, assume that 
\eqref{eq:feedback_interconnection_ydot_partial} is independent of $g$ and
\eqref{eq:feedback_interconnection_gdot_partial} is a group affine control system whose group linear term is independent of $y$. 
We will show that for all ${g \in G}$ and ${u \in \mathbb{U}}$, we have ${\Delta_{(g,u)} \in \mathfrak{g}(M)}$  (as in Def.~\ref{def:weak_invar_residual}).
By Fact~\ref{fact:group_affine_group_linear_decomposition}, 
$f^{\hspace{.5pt}G}$  takes the form \eqref{eq:cascade_gdot_weak} for some  ${w : \mathbb{U} \to \aut(G)}$
and 
${v : M/G \times \mathbb{U} \to \mathfrak{left}(G)}$.
In particular, evaluating \eqref{eq:feedback_interconnection_gdot_partial} at ${e \in G}$ and applying 
\eqref{eq:left_invariant_part_determined_by_evaluation_at_identity},
we may conclude that
\begin{subequations}
    \begin{align}
    v_{(y,u)}(g) 
    &= \diff \Left_g \circ \mathcal{A}\circ f_u \circ s(y),
    \\ w_u(g) 
    &= \diff\Left_g\circ\mathcal{A}\circ \Delta_{(g,u)} \circ s(y).
\label{eq:group_linear_term_with_connection_and_section}
\end{align}
\end{subequations}
Since \eqref{eq:feedback_interconnection_ydot_partial}-\eqref{eq:feedback_interconnection_gdot_partial} is a cascade (by assumption), Theorem~\ref{thm:partial_cascade} implies that $\Phi$ is (at least) a partial symmetry of $f$, and thus $\Delta$ takes values in $\mathfrak{vert}(M)$. This implies the existence of a smooth map  ${\zeta:M/G\times G\times  \mathbb{U}\to \mathfrak{g}}$ such that
$${\Delta_{(g,u)} \circ s(y) = (\zeta_{(y,g,u)})_M \circ s(y)}$$
for all ${y \in M/G}$, ${g \in G}$, and ${u \in \mathbb{U}}$. 
Inserting this into 
\eqref{eq:group_linear_term_with_connection_and_section}, the properties of principal connection one-forms imply that
\begin{gather}
    w_u(g) = \diff\Left_g\circ \zeta_{(y,g,u)}
    \label{eq:left-trivialized-group linear-term}
\end{gather}
for all  ${y \in M/G}$, ${g \in G}$, and
$u \in \mathbb{U}$. Since the left-hand side of the last equation is independent of $y$, the right-hand side must be also, and there exists a map $\xi : G \times U \to \mathfrak{g}$ such that 
\begin{align}
\Delta_{(g,u)} (q) 
= (\zeta_{(\pi(q),g,u)})_M (q)
= (\xi_{(g,u)})_M (q)
\label{eq:redefine_lie_algebra_representative_of_residual}
\end{align}
for all $q \in s(M/G)$, $g \in G$, and $u \in \mathbb{U}$. It remains to show that this same relation holds even when $q \not \in s(M/G)$. 
Noting that for any ${x \in M}$, there exists ${h \in G}$ and ${q \in s(M/G)}$ such that ${x = \Phi_h(q)}$, we use Lemma \ref{lemma:residual_one_cocycle} to compute
\begin{align*}
     \Delta_{(g,u)}(x) &= 
    \diff \Phi_h \circ \diff \Phi_{h^{-1}}
    \circ  \Delta_{(g,u)} \circ \Phi_h (q)
    \\ &=  \diff \Phi_h \big( \Delta_{(gh,u)} (q) - \Delta_{(h,u)} (q) \big)
    \\ &=  \diff \Phi_h \circ(\xi_{(gh,u)} - \xi_{(h,u)})_M(q)
\end{align*}
Therefore, applying the connection one-form to both sides and recalling that $\mathcal{A}\circ \diff\Phi_h = \Ad_h\circ\, \mathcal{A}$, we compute
\begin{align}
    \mathcal{A}\circ\Delta_{(g,u)}(x) &= \mathcal{A}\circ\diff \Phi_h \circ(\xi_{(gh,u)} - \xi_{(h,u)})_M(q) \nonumber  \\
    &= \Ad_h\circ \, \mathcal{A}\circ(\xi_{(gh,u)} - \xi_{(h,u)})_M(q) \nonumber \\
    &=\Ad_h(\xi_{(gh,u)} - \xi_{(h,u)}).\label{eq:weak_cascade_proof_Ad_xi_gh}
\end{align}
Meanwhile, \eqref{eq:left-trivialized-group linear-term}, \eqref{eq:redefine_lie_algebra_representative_of_residual}, and Fact~\ref{fact:group_affine_group_linear_decomposition} imply that
\begin{align*}
    w : (g,u) \mapsto \diff\Left_{g}\circ\xi_{(g,u)}
\end{align*}
is a well-defined group linear control system. Thus, we have
\begin{align*}
     \diff\Left_{gh}\circ\xi_{(gh,u)}
     &=\diff\Left_{gh}\circ\xi_{(h,u)}+\diff\Right_h\circ\diff\Left_g\circ\xi_{(g,u)},
\end{align*}
from which we may obtain another ``one-cocycle'' identity,
\begin{align}\label{eq:weak_cascade_proof_xi_gh}
    \xi_{(gh,u)} 
   = \xi_{(h,u)} + \Ad_{h^{-1}} \circ \, \xi_{(g,u)}.
\end{align}
Simplifying \eqref{eq:weak_cascade_proof_Ad_xi_gh} using the last equation yields
\begin{align*}
     \mathcal{A}\circ\Delta_{(g,u)}(x) 
     &=\xi_{(g,u)}.
\end{align*}
Having already established that ${\Delta_{(g,u)} \in \mathfrak{vert}(M)}$, it suffices to note that $\mathcal{A}$ maps every vertical vector to its generator, and thus ${\Delta_{(g,u)} =(\xi_{(g,u)})_M}$ for all ${g \in G}$ and ${u \in \mathbb{U}}$.
\end{proof}

It was first shown in \cite{Grizzle1985} that strongly invariant systems admit a cascade decomposition
of the form shown in Fig.~\ref{fig:block_diagram_strong_symmetry}, in which the group dynamics are left invariant.
The previous theorem generalized that conclusion to weakly invariant systems, replacing ``left invariant'' with ``group affine''.
The final theorem in this section recovers (and also sharpens) the prior result for strongly invariant systems as a corollary of our Theorem~\ref{thm:weak_cascade}, showing also that strong symmetry is in fact \textit{necessary} for such a decomposition.

\begin{figure}[t]
    \centering

    \textsf{{\textsc{Strong Symmetry}}}%
    \vspace{6pt}
    
    \begin{tikzpicture}

\filldraw[thick, fill=ieee_blue!10, draw=ieee_blue!50, very thin]
    (2.6,-.7) 
    rectangle
    (-.8,1);
    
    \node[yshift=.25cm] (top_of_f) at (2.7*.5-.9*.5,1) { 
        \footnotesize 
        \textsf{control system on $M/G$}
        };

\filldraw[thick, fill=ToolOrange!8, draw=ToolOrange!50,very  thin]
    (6.8,-.7) 
    rectangle
    (3.5,1);

    \node[yshift=.25cm] (top_of_g) at (6.8*.5 + 3.5*.5 ,1) { 
        \footnotesize 
        \textsf{left-invariant system on $G$}
        };

\node[block] (f) {$\tilde{f}$};
\node[integrator, right=.8cm of f] (inty) {$\int$};

\node[block, right=3.5cm of f, ] (v) {$v$};

\node[integrator, right=.8cm of v] (intg) {$\int$};

\draw[arrow]
    ($(f.west)+(-1.35,0)$)
    -- node[above] {$u$} 
    ($(f.west)+(-.8,0)$)
    coordinate(usplice)
    node[splice]
    |- (f.west);

\draw[arrow]
	(usplice)
    |- 
    ($(v)+(0,-1)$)
    coordinate(uspliceagain)
    --
    (v.south);

\draw[arrow]
    (f.east)
    -- node[above] {$\dot y$} (inty.west);

\draw[arrow]
    (inty.east)
    -- 
        node[above] {$y$}
    ++(.5,0)
    coordinate (ysplice)
    node [splice]
    -- ++(0,0.65)
    -- ($(f) + (0cm, 0.65cm)$)
	-- (f.north)
    ;

\draw[arrow]
    (ysplice)
    --
        node[above, pos=.5] {$y$}
    (v.west);

\draw[arrow]
    (v.east)
    -- node[above, pos=.4] {$\dot g$} (intg.west);

\draw[arrow]
    (intg.east)
    -- 
	node[above] {$g$}
	++(.5,0)
    coordinate (gout)
    |- ($(v)+(0,0.65)$)
    -- (v.north);

\end{tikzpicture}
    \caption{A strongly $\Phi$-invariant system admits a cascade decomposition in which the group dynamics $v \!:\! M/G \times \mathbb{U} {\to} \mathfrak{left}(G)$ are left-invariant.}
    \label{fig:block_diagram_strong_symmetry}
\end{figure}

\begin{theorem}[Strong Symmetry, {\textit{cf.}} {\cite{Grizzle1985}}]\label{cor:strong_cascade}
    Let all assumptions of Lemma~\ref{thm:feedback_interconnection} hold. 
    Then, $\Phi$ is a strong symmetry of $f$ if and only if
$f^{M/G}$ in \eqref{eq:feedback_interconnection_ydot_partial} is independent of $g$ and 
$f^{\hspace{1pt}G}$ in \eqref{eq:feedback_interconnection_gdot_partial}
is a left-invariant control system.
In particular, 
the subsystems then evolve according to
    \begin{subequations}
        \begin{align}
            \dot{y} &= \tilde{f}_u(y), \label{eq:cascade_ydot_strong}\\
            \dot{g} &= v_{(y,u)}(g), \label{eq:cascade_gdot_strong}
        \end{align}
    \end{subequations}
    where
        ${\tilde{f}}$  is defined by \eqref{define_projected_control_system} and 
        $v$
        is defined by \eqref{eq:left_invariant_component_along_orbits}.
\end{theorem}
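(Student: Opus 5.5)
The plan is to obtain this result as a corollary of Theorem~\ref{thm:weak_cascade}, combined with item (iii) of Prop.~\ref{prop:properties_of_symmetry_group_control_system}. Strong symmetry is the special case of weak symmetry in which the residual vanishes. By Prop.~\ref{prop:weak-invariant_system} and \eqref{eq:residual_as_infinitesimal_generator_with_W}, this is the same as requiring that the associated control system $w$ vanish identically.

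For necessity, assume $\Phi$ is a strong symmetry of $f$. Then $\Delta \equiv 0$, so $\Delta_{(g,u)} = (0)_M \in \mathfrak{g}(M)$ and $\Phi$ is in particular a weak symmetry. By Prop.~\ref{prop:weak-invariant_system}, it is weak with respect to $w_u(g) = \diff\Left_g \circ \xi_{(g,u)}$ with $\xi \equiv 0$, so $w \equiv 0$; this also follows from Prop.~\ref{prop:properties_of_symmetry_group_control_system}(iii). Theorem~\ref{thm:weak_cascade} then gives $\dot y = \tilde f_u(y)$, which is independent of $g$. It also gives $\dot g = w_u(g) + v_{(y,u)}(g) = v_{(y,u)}(g)$, with $v$ as in \eqref{eq:left_invariant_component_along_orbits}. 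Since $v$ takes values in $\mathfrak{left}(G)$, $f^G$ is left-invariant, which yields \eqref{eq:cascade_ydot_strong}--\eqref{eq:cascade_gdot_strong}.

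For sufficiency, assume $f^{M/G}$ is independent of $g$ and $f^G$ is left-invariant. A left-invariant system is group affine, with unique decomposition given by Fact~\ref{fact:group_affine_group_linear_decomposition}. By \eqref{eq:left_invariant_part_determined_by_evaluation_at_identity}, its group linear part is $f^G_{(y,u)}(g) - \diff\Left_g\circ f^G_{(y,u)}(e)$, which is zero because $f^G_{(y,u)}$ is left-invariant. The zero field is trivially independent of $y$, so the hypotheses of Theorem~\ref{thm:weak_cascade} hold. Hence $\Phi$ is a weak symmetry of $f$ with respect to some $w : \mathbb{U}\to\aut(G)$, and the cascade takes the form \eqref{eq:cascade_gdot_weak}.

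The step needing the most care is identifying this $w$ with the zero group linear term of $f^G$. I would use the uniqueness in Fact~\ref{fact:group_affine_group_linear_decomposition}, together with the explicit formula \eqref{eq:group_linear_term_with_connection_and_section} and the construction in the proof of Theorem~\ref{thm:weak_cascade}. That construction defines $\xi$ through $w_u(g) = \diff\Left_g\circ\xi_{(g,u)}$ and shows $\Delta_{(g,u)} = (\xi_{(g,u)})_M$, so $w \equiv 0$ forces $\xi \equiv 0$ and hence $\Delta \equiv 0$. Equivalently, once $w \equiv 0$ is established, Prop.~\ref{prop:properties_of_symmetry_group_control_system}(iii) yields strong invariance directly.
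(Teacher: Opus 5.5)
Your proposal is correct and follows the paper's proof. Like the paper, you reduce to Theorem~\ref{thm:weak_cascade} using two facts: Prop.~\ref{prop:properties_of_symmetry_group_control_system}(iii), that $w \equiv 0$ if and only if $\Phi$ is a strong symmetry, and Fact~\ref{fact:group_affine_group_linear_decomposition}, that a group affine system is left-invariant if and only if its group linear term vanishes; your use of the uniqueness in that decomposition to identify the weak-invariance $w$ with the (zero) group linear term of $f^G$ spells out a step the paper leaves implicit.
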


\begin{proof}
Prop.~\ref{prop:properties_of_symmetry_group_control_system} states that a control system is strongly $\Phi$-invariant if and only if it is weakly $\Phi$-invariant with respect to ${w \equiv 0 \in \X(G)}$, and Fact~\ref{fact:group_affine_group_linear_decomposition} implies that a group affine control system is left-invariant if and only if its group linear term ${w}$ vanishes. Thus, the result follows from Theorem~\ref{thm:weak_cascade}.
\end{proof}

\subsection{Control Systems on Lie Groups}

The structure theorems of Section~\ref{sec:cascade_decompositions} characterized control systems evolving on arbitrary manifolds endowed with a free and proper group action. 
The present section examines the special case of a control system on a Lie group $G$ and the (free and proper) left translation action of $G$ on itself.
In this setting, the strongly invariant control systems are precisely the classical left-invariant systems ${f : \mathbb{U} \to \mathfrak{left}(G)}$, which are highly-structured (and thus enjoy many properties useful for control design).
In contrast, restricting partial symmetry to this setting reduces the notion to a vacuous concept, as follows.

\begin{proposition}\label{every_lie_group_system_has_partial_symmetry}
    The left translation action
    ${\Left : G \times G \to G}$ 
    is a partial symmetry of every control system ${f : \mathbb{U} \to \X(G)}$.
\end{proposition}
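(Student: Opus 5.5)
The plan is to verify Definition~\ref{defn:res_ps} directly, by showing that the vertical distribution of the principal bundle induced by $\Left$ is the whole tangent bundle. Then every vector field on $G$, and in particular every residual, is vertical.

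First, I would identify the quotient. The left translation action $\Left : G \times G \to G$ is free, since $gh = h$ forces $g = e$. It is proper, since $(g,h) \mapsto (gh, h)$ is a diffeomorphism of $G \times G$ with inverse $(k,h) \mapsto (kh^{-1}, h)$, and diffeomorphisms are proper maps. So the standing assumption holds. The action is also transitive: for any $h_1, h_2 \in G$ we have $h_2 = \Left_{h_2 h_1^{-1}}(h_1)$. Hence there is exactly one orbit, $G/G$ is a single point, and $\pi : G \to G/G$ is the constant map.

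Next, I would compute the vertical distribution. Because $\pi$ is constant, $\diff \pi$ vanishes identically, so $\V G = \ker \diff\pi = \T G$. As a cross-check that avoids the quotient, note that $\Left^h = \Right_h$. The differential $\diff\Left^h : \mathfrak{g} \to \T_h G$ is therefore $\diff \Right_h$ restricted to $\T_e G$, which is a linear isomorphism. So each orbit, namely $G$ itself, is tangent to all of $\T_h G$. Either way we get $\mathfrak{vert}(G) = \X(G)$.

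Finally, for any $g \in G$ and $u \in \mathbb{U}$, the residual $\Delta_{(g,u)} = \diff \Left_{g^{-1}} \circ f_u \circ \Left_g - f_u$ is a (smooth) vector field on $G$. It therefore lies in $\X(G) = \mathfrak{vert}(G)$, which is exactly the condition in Definition~\ref{defn:res_ps}.

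There is no real obstacle here. The only point that needs care is confirming that the degenerate case, where the quotient is a single point, fits the principal bundle framework of Section~\ref{subsec:principal_bundles}. The isomorphism argument via $\diff\Right_h$ sidesteps this, since it establishes the tangency of the orbits directly.
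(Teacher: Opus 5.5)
Your proposal is correct and matches the paper's proof: transitivity of $\Left$ makes $G/G$ a singleton, so $\diff\pi$ annihilates all of $\T G$, giving $\mathfrak{vert}(G) = \X(G)$, and hence every residual is vertical. Your extra checks (freeness, properness, and the cross-check that $\diff \Left^h = \diff\Right_h$ restricted to $\mathfrak{g}$ is an isomorphism onto $\T_h G$) are valid; the paper's proof does not spell them out.
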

\begin{proof}
    Since 
    ${\Left : G \times G \to G}$
    is transitive, there is a single orbit (\textit{i.e.}, $G/G$ is a singleton), and thus $\diff \pi$ annihilates every tangent vector in $ \T G$.
    It follows that $\mathfrak{vert}(G) = \X(G)$, hence ${\Delta_{(g,u)} \in \mathfrak{vert}(G)}$ automatically for all ${g \in G}$ and ${u \in \mathbb{U}}$. 
\end{proof}

This straightforward result highlights an essential blindspot of partial symmetry; namely, the residual is completely unconstrained in the vertical directions, and thus no structure is enforced \textit{within} each orbit. In contrast, the much more structured residual of a weakly invariant system (even in the vertical directions) enforces greater rigidity within each orbit, as exemplified by the following interesting corollary.

\begin{corollary}
    A control system ${f : \mathbb{U} \to \X(G)}$ is group affine if and only if it is weakly 
    left-invariant. 
\end{corollary}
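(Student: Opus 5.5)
The plan is to work directly with the residual of $f$ with respect to the left translation action $\Phi = \Left$ on $M = G$, rather than going through the cascade machinery. For this action, $\Phi_g = \Left_g$ and $\Phi^x = \Right_x$. The infinitesimal generators are therefore $\xi_G(x) = \diff \Right_x(\xi)$, so $\mathfrak{g}(G)$ is the set of right-invariant vector fields. The residual is
\begin{equation*}
\Delta_{(g,u)}(x) = \diff\Left_{g^{-1}}\circ f_u(gx) - f_u(x).
\end{equation*}
The equivalence then amounts to checking when this lies in $\mathfrak{g}(G)$ for every $g$ and $u$. The cleanest route is Proposition~\ref{prop:weak-invariant_system}, which says weak left-invariance is equivalent to the existence of $w : \mathbb{U} \to \X(G)$ with
\begin{equation*}
f_u(gx) = \diff\Left_g\circ f_u(x) + \diff\Right_x\circ w_u(g).
\end{equation*}
Here we use that $\diff\Phi(w_u(g), f_u(x)) = \diff\Right_x w_u(g) + \diff\Left_g f_u(x)$.

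For the direction ``weakly left-invariant $\Rightarrow$ group affine'', Proposition~\ref{prop:properties_of_symmetry_group_control_system}(i) gives $w_u \in \aut(G)$. I then evaluate the displayed identity at $g=e$. Group linearity forces $w_u(e)=0$, since setting $g=h=e$ in \eqref{definition_of_group_linear_vector_field} gives $w_u(e)=2w_u(e)$, so the identity is consistent there. Evaluating at $x=e$ instead gives
\begin{equation*}
w_u(g) = f_u(g) - \diff\Left_g\circ f_u(e),
\end{equation*}
which is exactly the group linear part in Fact~\ref{fact:group_affine_group_linear_decomposition}. So $f_u = w_u + v_u$ with $v_u(g) = \diff\Left_g f_u(e)$ left-invariant and $w_u$ group linear, and Fact~\ref{fact:group_affine_group_linear_decomposition} yields that $f$ is group affine.

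For the converse, assume $f$ is group affine and decompose $f_u = w_u + v_u$ via Fact~\ref{fact:group_affine_group_linear_decomposition}. Then, using \eqref{definition_of_group_linear_vector_field} for $w_u$ and left-invariance of $v_u$,
\begin{align*}
f_u(gx) &= \diff\Left_g w_u(x) + \diff\Right_x w_u(g) + \diff\Left_g v_u(x) \\
&= \diff\Left_g f_u(x) + \diff\Right_x w_u(g).
\end{align*}
This is precisely the condition of Proposition~\ref{prop:weak-invariant_system}, with $w$ smooth since $f$ is, so $f$ is weakly left-invariant. Alternatively, the second route is to invoke the group affine identity directly, which produces the same expression after one cancellation.

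The main point needing care is that the standing assumption and Proposition~\ref{prop:weak-invariant_system} apply to this action. Left translation is free and proper, so that is fine. I also need the identification $\diff\Phi(w_u(g), f_u(x)) = \diff\Right_x w_u(g) + \diff\Left_g f_u(x)$, which is just the product rule for the group multiplication. Beyond that, the argument is a two-line verification in each direction, and no step seems to be a genuine obstacle.
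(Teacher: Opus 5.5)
Your argument is correct, but it takes a different route from the paper. The paper obtains the corollary as a one-line specialization of Theorem~\ref{thm:weak_cascade}. Since $G/G$ is a point, the connection one-form reduces to the right Maurer--Cartan form $\mathcal{A}(v_g) = \diff\Right_{g^{-1}}(v_g)$, and the section can be taken as $s \equiv e$, so the decomposed state $g$ is just $x$. The cascade then collapses to $\dot g = w_u(g) + \diff\Left_g\circ f_u(e)$, and the equivalence in Theorem~\ref{thm:weak_cascade} becomes the statement itself.

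You instead skip the trivialization and work from the characterization in Proposition~\ref{prop:weak-invariant_system}, written out via the product rule as $f_u(gx) = \diff\Left_g\circ f_u(x) + \diff\Right_x\circ w_u(g)$. You combine this with Proposition~\ref{prop:properties_of_symmetry_group_control_system}(i) and Fact~\ref{fact:group_affine_group_linear_decomposition}, and both of your verifications are sound.

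The paper's version is shorter and makes the conceptual point that group affine systems are exactly the zero-dimensional-base case of the weak cascade. However, it uses connection and section machinery in a degenerate setting. For the direction ``group affine $\Rightarrow$ weakly left-invariant'' it also relies on the converse half of Theorem~\ref{thm:weak_cascade} (cascade structure $\Rightarrow$ weak symmetry), which is the most involved argument in that section. Your version is elementary and self-contained. It also shows explicitly that the system $w$ witnessing weak left-invariance is exactly the group linear part $w_u(g) = f_u(g) - \diff\Left_g\circ f_u(e)$ from Fact~\ref{fact:group_affine_group_linear_decomposition}.

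You could shorten your forward direction further. Substituting that expression for $w_u(g)$ back into the identity gives the group affine identity directly, since left and right translations commute. So neither Proposition~\ref{prop:properties_of_symmetry_group_control_system}(i) nor your consistency check at $g=e$ is actually needed.
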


\begin{proof}
    This result is an immediate consequence of Theorem~\ref{thm:weak_cascade}, since $G/G$ is a singleton, and thus the only (positive-dimension) subsystem is the one evolving on $G$ itself.
    In particular, the principal connection one-form reduces to the right Maurer-Cartan form, \textit{i.e.}, ${\mathcal{A}(v_g) = \diff \Right_{g^{-1}}(v_g)}$, and a choice of section ${s : G/G \to G}$ amounts to a choice of any (fixed) point in $G$. 
    Evaluating \eqref{eq:cascade_gdot_weak} with ${s \equiv e \in G}$ yields
    \begin{align*}
        \dot{g} 
        = w_u(g) + \diff \Left_g \circ \diff \Right_{{e}^{-1}} \circ f_u(e) 
        = w_u(g) + \diff \Left_g \circ f_u(e),
    \end{align*}
    precisely the form of a group affine control system.
\end{proof}

Thus, weak invariance generalizes not only the classical family of strongly invariant systems, but also the important, well-studied class of group affine systems. 
However, weak symmetry is considerably more general---unlike a group affine system, a weakly invariant system need not evolve on a Lie group, and the group action need not be transitive.

\section{Error Dynamics}
\label{sec:error_dynamics}

This section studies the relative motion of two trajectories of the same control system, a  situation commonly encountered in trajectory tracking control (\textit{i.e.}, actual and reference states) and observer design (\textit{i.e.}, actual and estimated states).

\subsection{Error Dynamics on Lie Groups}
\label{sec:lie_group_error_dynamics}

Consider any control system $f : \mathbb{U} \to \X(G)$ on a Lie group $G$ and any
two trajectories $g(t), \hat{g}(t)$ under respective inputs $u(t), \hat{u}(t)$. Then, the error state ${\tilde{g} = \hat{g}^{-1} g}$ is governed by
\begin{equation*}
    \dot {\tilde{g}} = \ddt (\hat{g}^{-1} g)
        = 
        \diff \Left_{\hat{g}^{-1}} \circ f_u(g)
        -
        \diff \Right_{g} \big(
        \diff \Right_{\hat{g}^{-1}} \circ \, \diff \Left_{\hat{g}^{-1}}  \circ f_{\hat{u}}(\hat{g}) \big),
\end{equation*}
which depends nontrivially on $g$, $\hat{g}$ and $u$, $\hat{u}$ in general.
In contrast, if $f$ is left-invariant, the last equation simplifies to
\begin{align}
    \label{eq:left_invariant_error_dynamics}
    \dot {\tilde{g}} &= 
        f_u(\tilde{g})
        -
        \diff \Right_{\tilde{g}}  \circ f_{\hat{u}}(e),
\end{align}
which depends only on $u$, $\hat{u}$ and the error state $\tilde{g}$ (and \textit{not} on $g$ or $\hat{g}$ individually). Thus, we may effectively ``factor out'' the symmetry group $G$, obtaining a lower-dimensional control system directly governing the error itself. 
This state-independence property 
has been used to great effect in the design of observers for left-invariant systems with favorable convergence properties \cite{Bonnabel2008tac,Bonnabel2009tac}. 
Remarkably, the error dynamics of group affine systems \textit{also} enjoy a similar state independence property, first studied \cite{Barrau2017} in the case that ${u = \hat{u}}$ (as in filtering problems). Here, we prove a slightly different result, replacing the assumption that ${u=\hat {u}}$ with the assumption that the group linear term is autonomous.

\begin{lemma}\label{lem:group_affine_error_dynamics}
Consider a group affine control system with an autonomous group linear term, given by
\begin{align}\label{eq:group_affine_system}
    \dot{g} = w(g) + v_u(g),
\end{align}
 where
${w \in \aut(G)}$
and
${v : \mathbb{U} \to \mathfrak{left}(G)}$. Let
${g(t),\hat{g}(t) \in G}$ be trajectories of this system for inputs ${u(t), \hat{u}(t) \in \mathbb{U}}$ respectively. Then, the  error state ${\tilde{g} = \hat{g}^{-1} g}$ evolves according to
\begin{align}\label{eq:group_affine_error_dynamics}
    \dot{\tilde{g}}
    &= (w - \cn_{v_{\hat{u}}(e)})(\tilde{g}) 
    + (v_u - v_{\hat{u}})(\tilde{g}).
\end{align}
\end{lemma}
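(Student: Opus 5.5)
Since the error dynamics for an arbitrary system on $G$ are already written out in Sec.~\ref{sec:lie_group_error_dynamics}, the plan is to substitute $f_u = w + v_u$ there and simplify the two resulting groups of terms separately: the group linear contributions and the left-invariant contributions. Writing $g = \hat g \tilde g$, we have
\begin{align*}
    \dot{\tilde g} = \diff \Left_{\hat g^{-1}} \circ \big(w(\hat g \tilde g) + v_u(\hat g \tilde g)\big) - \diff \Right_{\tilde g} \circ \diff \Left_{\hat g^{-1}} \circ \big(w(\hat g) + v_{\hat u}(\hat g)\big),
\end{align*}
using that left and right translations commute and $\diff \Right_g \circ \diff \Right_{\hat g^{-1}} = \diff \Right_{\tilde g}$ once composed with $\diff\Left_{\hat g^{-1}}$ (indeed $\Right_{g}\circ\Right_{\hat g^{-1}}\circ \Left_{\hat g^{-1}} = \Right_{\tilde g}\circ \Left_{\hat g^{-1}}\circ \Right_{\hat g}\circ\Right_{\hat g^{-1}}$ after rearranging, which I will verify carefully).

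For the group linear part, apply \eqref{definition_of_group_linear_vector_field} with the pair $(\hat g, \tilde g)$: $w(\hat g \tilde g) = \diff \Left_{\hat g} \circ w(\tilde g) + \diff \Right_{\tilde g} \circ w(\hat g)$. Applying $\diff \Left_{\hat g^{-1}}$, the second term becomes $\diff \Right_{\tilde g} \circ \diff \Left_{\hat g^{-1}} \circ w(\hat g)$, which cancels exactly with the $w(\hat g)$ term subtracted above, leaving $w(\tilde g)$. This is the key use of autonomy of $w$: the same $w$ appears for both trajectories, so the cancellation is exact.

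For the left-invariant part, left invariance gives $\diff\Left_{\hat g^{-1}}\circ v_u(\hat g\tilde g) = v_u(\tilde g)$ and $\diff\Left_{\hat g^{-1}}\circ v_{\hat u}(\hat g) = v_{\hat u}(e)$, so this part equals $v_u(\tilde g) - \diff\Right_{\tilde g}\circ v_{\hat u}(e)$. Finally I rewrite $-\diff\Right_{\tilde g}\circ v_{\hat u}(e) = -\cn_{v_{\hat u}(e)}(\tilde g) - \diff\Left_{\tilde g}\circ v_{\hat u}(e) = -\cn_{v_{\hat u}(e)}(\tilde g) - v_{\hat u}(\tilde g)$ using \eqref{definition_of_cn_vector_field}, and collecting terms yields \eqref{eq:group_affine_error_dynamics}. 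The only step requiring care is the bookkeeping of the translation identity in the first display (commuting $\diff\Left$ and $\diff\Right$ to express the subtracted term as $\diff\Right_{\tilde g}\circ\diff\Left_{\hat g^{-1}}(\cdot)$); everything else is a direct substitution.
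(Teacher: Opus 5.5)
Your proof is correct and follows essentially the same route as the paper's: differentiate $\hat g^{-1}g$, then use group linearity of $w$, left-invariance of $v$, and the definition of $\cn$. The only difference is a minor streamlining, in that you apply group linearity once at the pair $(\hat g,\tilde g)$ instead of first deriving $w(\hat g^{-1})$ and recombining at $(\hat g^{-1},g)$; also, the translation identity you flag needs no care, since $\Right_g\circ\Right_{\hat g^{-1}} = \Right_{\hat g^{-1}g} = \Right_{\tilde g}$ holds outright without composing with $\Left_{\hat g^{-1}}$.
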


\begin{proof}
First, recalling that $w(e) = 0$, one has that
\begin{align*}
    0 &= w(e) = w(g g^{-1})
    = \diff \Right_{g^{-1}} \circ w(g) + \diff \Left_{g} \circ w(g^{-1}),
\end{align*}
and thus $w(g^{-1}) = - \diff \Right_{g^{-1}} \diff \Left_{g^{-1}} w(g)$, for all $g \in G$.
The time derivative of $\hat{g}^{-1}$ is therefore given by
\begin{align*}
    \ddt (\hat{g}^{-1})
    &= -\diff \Right_{\hat{g}^{-1}} \circ \diff \Left_{\hat{g}^{-1}} ( \dot{\hat{g}} ) \\
    &= -\diff \Right_{\hat{g}^{-1}} \circ \diff \Left_{\hat{g}^{-1}} \circ \big( w(\hat{g}) + v_{\hat{u}}(\hat{g})\big) \\
    &= w(\hat{g}^{-1}) - \diff \Right_{\hat{g}^{-1}} \circ v_{\hat{u}}(e).
\end{align*}
From here, the time derivative of $\tilde{g}$ is computed to be
\begin{align*}
    \dot{\tilde{g}}
    &= \diff \Right_g \circ \ddt(\hat{g}^{-1}) +  \diff \Left_{\hat{g}^{-1}} ( \dot{g} ) \\
    &= \diff \Right_g \circ \big(w(\hat{g}^{-1}) - \diff \Right_{\hat{g}^{-1}} \circ v_{\hat{u}}(e)\big) 
    \\ &\hspace{0.5cm}
    +  \diff \Left_{\hat{g}^{-1}} \circ \big(w(g) + \diff \Left_g \circ v_u(e)\big) \\
    &= \diff \Right_g \circ w(\hat{g}^{-1}) - \diff \Right_{\hat{g}^{-1} g} \circ v_{\hat{u}}(e)
    \\ &\hspace{0.5cm}
    +  \diff \Left_{\hat{g}^{-1}} \circ w(g) + \diff \Left_{\hat{g}^{-1} g} \circ v_u(e) \\
    &= \diff \Right_g \circ w(\hat{g}^{-1}) 
    + \diff \Left_{\hat{g}^{-1}} \circ w(g)
    \\ &\hspace{0.5cm}
    - \diff \Right_{\hat{g}^{-1} g} \circ v_{\hat{u}}(e)
    + \diff \Left_{\hat{g}^{-1} g} \circ v_u(e).
\end{align*}
Applying the definition of a group linear vector field \eqref{definition_of_group_linear_vector_field} and that of $\cn$ \eqref{definition_of_cn_vector_field} then gives
\begin{align*}
    \dot{\tilde{g}}
    &= w(\hat{g}^{-1} g) 
    - \cn_{v_{\hat{u}}(e)}(\hat{g}^{-1} g)
    \\ &\hspace{0.5cm}
    - \diff \Left_{\hat{g}^{-1} g} \circ v_{\hat{u}}(e)
    + \diff \Left_{\hat{g}^{-1} g} \circ v_u(e) \\
    &= w(\tilde{g}) 
    - \cn_{v_{\hat{u}}(e)}(\tilde{g})
    - \diff \Left_{\tilde{g}} \circ v_{\hat{u}}(e)
    + \diff \Left_{\tilde{g}} \circ v_u(e).
\end{align*}
This completes the proof.
\end{proof}

Lemma~\ref{lem:group_affine_error_dynamics} shows that the error between two trajectories of a group affine system (with autonomous group linear term) is itself governed by group affine dynamics, though they have a nonautonomous group linear component given by $w - \cn_{v_{\hat{u}}(e)}$.

\subsection{Error Dynamics in Weakly Invariant Systems}

The relative motion of two trajectories of the control system ${f : \mathbb{U} \to \X(M)}$ can be analyzed (in part) by first applying the decomposition of Lemma~\ref{thm:feedback_interconnection} and again studying the evolution of the group error ${\tilde{g} = \hat{g}^{-1} g}$, which now captures only the system's relative motion along the orbits. 
If $\Phi$ is a strong symmetry of $f$, then (by Theorem~\ref{cor:strong_cascade})
the group dynamics are left-invariant, and thus the 
 group \textit{error} dynamics are state-independent (an observation analogous to the reduction perspective explored in \cite{welde2025leveragingsymmetryacceleratelearning}).
 However, if $\Phi$ is only a partial symmetry, the decomposition of Theorem \ref{thm:partial_cascade} recovers no special structure in the group subsystem, and thus no reduction of the group error dynamics can be achieved. 
In contrast, weakly invariant systems \textit{do} admit such a reduction, as follows.

\begin{theorem}[Error Dynamics]\label{thm:error_dynamics}
    Suppose ${f : \mathbb{U} \to \X(M)}$ is weakly $\Phi$-invariant with respect to ${w  \in \X(G)}$ (\textit{i.e.}, $\Delta$ is autonomous).
    Let $x(t),u(t)$ and $\hat x(t), \hat u(t)$ be any two solutions of $f$, where ${x = \Phi_g \circ s(y)}$ and ${\hat x = \Phi_{\hat{g}} \circ s(\hat y)}$ for any global section ${s: M/G \to M}$.
    Then, the error state ${\tilde{g} = \hat{g}^{-1} g \in G}$ and the reduced states 
    ${y, \hat y \in M/G}$
    evolve according to
\begin{subequations}
        \begin{align}
            \dot {\tilde g} &= (
            w - \cn_{v_{(\hat y, \hat u)}(e)}
            )(\tilde{g}) + ({v}_{(y, u)} - {v}_{(\hat y, \hat u)} )(\tilde{g}),\label{eq:error_g}
            \\ \dot y &= \tilde{f}_u(y), \label{eq:error_y}
            \\ \dot {\hat y} &= \tilde{f}_{\hat u}(\hat y), \label{eq:error_y_hat} 
        \end{align}
    \end{subequations}
     where 
    ${{v} : M/G \times \mathbb{U} \to \mathfrak{left}(G)}$ is defined by \eqref{eq:left_invariant_component_along_orbits}
     and
    ${\tilde{f} : \mathbb{U} \to \X(M/G)}$ is defined by \eqref{define_projected_control_system}%
    .
\end{theorem}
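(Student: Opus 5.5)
The plan is to combine the cascade decomposition of Theorem~\ref{thm:weak_cascade} with the group affine error computation of Lemma~\ref{lem:group_affine_error_dynamics}. We treat the reduced state $y$ (resp.\ $\hat y$) as an additional input to the group subsystem.

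First, apply Theorem~\ref{thm:weak_cascade} to each solution separately, using the same section $s$. This immediately gives \eqref{eq:error_y} and \eqref{eq:error_y_hat}, since $\tilde f$ is defined by \eqref{define_projected_control_system} independently of $s$. It also gives
\begin{align*}
\dot g = w(g) + v_{(y,u)}(g), \qquad \dot{\hat g} = w(\hat g) + v_{(\hat y,\hat u)}(\hat g).
\end{align*}
By Prop.~\ref{prop:properties_of_symmetry_group_control_system}, autonomy of $\Delta$ means that $w$ is a single group linear vector field independent of the input. This is exactly where the hypothesis is used. Next, we regard both group trajectories as solutions of one group affine control system with augmented input space $\mathbb{U}' = M/G \times \mathbb{U}$. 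Its autonomous group linear term is $w \in \aut(G)$, and its left-invariant term is $v : \mathbb{U}' \to \mathfrak{left}(G)$, given by \eqref{eq:left_invariant_component_along_orbits}. The inputs are the signals $u'(t) = (y(t),u(t))$ and $\hat u'(t) = (\hat y(t),\hat u(t))$, which are smooth because $y = \pi\circ x$ is smooth.

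With this reinterpretation, Lemma~\ref{lem:group_affine_error_dynamics} applies verbatim. Substituting $v_{u'} = v_{(y,u)}$ and $v_{\hat u'} = v_{(\hat y,\hat u)}$ into \eqref{eq:group_affine_error_dynamics} yields \eqref{eq:error_g}. Note that the lemma's proof uses only pointwise-in-time identities, so it never needs the input to be a fixed element of $\mathbb{U}$, and a time-varying input signal is allowed.

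The only point needing care is to check that nothing in Lemma~\ref{lem:group_affine_error_dynamics} depends on the input space beyond its being a set and the input signals being smooth. It is also worth confirming that $v_{(y,u)}$ is genuinely left-invariant for every fixed $(y,u)$, which holds because $v_{(y,u)}(g) = \diff \Left_g \circ \mathcal{A}\circ f_u\circ s(y)$ with the Lie algebra element independent of $g$. Once this is verified, the result follows directly, and the three equations together form a closed system in $(\tilde g, y, \hat y)$ that does not involve $g$ or $\hat g$ individually.
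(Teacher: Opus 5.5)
Your proposal is correct and matches the paper's proof. Like the paper, you apply Theorem~\ref{thm:weak_cascade} to both trajectories, regard $(y,u)$ and $(\hat y,\hat u)$ as inputs from the augmented space $M/G\times\mathbb{U}$ so that both group subsystems are group affine with the same autonomous group linear term $w$, and then invoke Lemma~\ref{lem:group_affine_error_dynamics}. Your checks on the smoothness of the augmented input signals and the left-invariance of $v_{(y,u)}$ are left implicit in the paper.
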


\begin{proof}
    From Theorem~\ref{thm:weak_cascade}, the decomposed states ${y, \hat{y} \in M/G}$ and ${g, \hat{g} \in G}$ evolve according to
    \begin{align*}
        \dot{y} &= \tilde{f}_u(y), 
        &
        \dot{\hat{y}} &= \tilde{f}_{\hat{u}}(\hat{y}), 
        \\
        \dot{g} &= w(g) + v_{(y,u)}(g), 
        &
        \dot{\hat{g}} &= w(\hat{g}) + v_{(\hat y, \hat u) }(\hat g),
    \end{align*}
    where the  $g$ and $\hat g$ subsystems are both group affine systems with state space $G$, input space ${M/G \times \mathbb{U}}$, and autonomous group linear term $w$.
    Thus, it suffices to apply
    Lemma~\ref{lem:group_affine_error_dynamics}.
\end{proof}

The dynamics of the error $\tilde{g}$ in \eqref{eq:error_g} of Theorem~\ref{thm:error_dynamics} are themselves group affine, with inputs ${(y, \hat{y}, u,\hat{u}) \in (M/G)^2 \times \mathbb{U}^2}$. In particular, 
the left-invariant term is given by
\begin{align}
    \tilde{v}_{(y, \hat y, u, \hat u)} = 
    (v_{(y,u)} - v_{(\hat y, \hat u)}) \in \mathfrak{left}(G),
\end{align}
while the ({non}autonomous)  group linear term is given by 
\begin{align}
    \tilde{w}_{(\hat y, \hat u)} =  (w - \cn_{v_{(\hat{y}, \hat{u})}(e)}) \in \aut(G).
\end{align}
These dynamics may also be written in the form
\begin{align}
    \dot{\tilde{g}}
    = w(\tilde{g}) 
        + 
    \diff \Left_{\tilde{g}} \circ 
    v_{(y,u)}
    (e)
    - \diff \Right_{\tilde{g}} \circ v_{(\hat{y},\hat{u})}(e) 
    ,
    \label{eq:alternate_form_of_group_error_dynamics}
\end{align}
which emphasizes that the inputs $(\hat{y},\hat{u})$ and $(y,u)$ enter the dynamics of $\tilde{g}$ from ``opposite sides'' but makes the group affine structure less obvious.
When $w \equiv 0$ (\textit{i.e.}, if $\Phi$ is a strong symmetry of $f$), then the dynamics of $\tilde{g}$ simplify further to
\begin{align*}
    \dot{\tilde{g}} = 
    v_{(y,u)}(\tilde{g})
    - \diff \Right_{\tilde{g}} \circ v_{(\hat{y},\hat{u})}(e).
\end{align*}
Thus, in the case of strong symmetry, the dynamics of $g$ and $\hat{g}$ are left-invariant, while the dynamics of $\tilde{g}$ are not.
This contrasts with the case of weak symmetry, where the dynamics of $g$, $\hat{g}$, and $\tilde{g}$ are all group affine.

\section{Symmetry Subgroups}
\label{sec:symmetry_subgroups}

Given a (strong, weak, or partial) symmetry of a system, it is natural to ask about the flavor of symmetry enjoyed by subgroups of that symmetry.
In particular, given any closed subgroup ${H \leq G}$, restricting the first argument of a ${\Phi : G \times M \to M}$ to accept only group elements in $H$ yields a free and proper Lie group action ${\Phi^H : H \times M \to M}$ given by ${\Phi^H(h,x) := \Phi(h,x)}$, which we call a \textit{subgroup of $\Phi$}.
We recall a well-known fact on subgroups of a strong symmetry.

\begin{proposition}
Let ${\Phi : G \times M \to M}$ be a strong symmetry of ${f : \mathbb{U} \to \X(M)}$, 
and fix any closed subgroup ${H \leq G}$. Then, 
$\Phi^H$
is also a strong symmetry of $f$. 
\end{proposition}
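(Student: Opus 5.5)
The plan is to verify the defining identity \eqref{eq:phi_invariance_commutative_square} directly for the restricted action, working through the residual. First I will confirm that $\Phi^H$ is a legitimate symmetry candidate. Since $H$ is a closed subgroup of $G$, it is an embedded Lie subgroup by the closed subgroup theorem. The restricted map $\Phi^H = \Phi|_{H \times M}$ is therefore smooth and satisfies the action axioms inherited from $\Phi$. Freeness is immediate, because a fixed point of $\Phi_h$ with $h \in H$ is a fixed point of an element of $G$. Properness follows because $(h,x) \mapsto (\Phi_h(x),x)$ is the restriction of a proper map to the closed subset $H \times M$. The paper already asserts this, so I will only mention it briefly.

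Next I will compare residuals. Let $\Delta^H : H \times \mathbb{U} \to \X(M)$ denote the residual of $f$ with respect to $\Phi^H$. Definition~\ref{dfn:residual} gives, for every $h \in H$ and $u \in \mathbb{U}$,
\[
\Delta^H_{(h,u)} = \diff \Phi^H_{h^{-1}} \circ f_u \circ \Phi^H_h - f_u = \diff \Phi_{h^{-1}} \circ f_u \circ \Phi_h - f_u = \Delta_{(h,u)}.
\]
So $\Delta^H$ is simply the restriction of $\Delta$ to $H \times \mathbb{U}$.

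Since $\Phi$ is a strong symmetry, $\Delta \equiv 0$ on $G \times \mathbb{U}$. Hence $\Delta^H \equiv 0$, which is exactly the strong $\Phi^H$-invariance of $f$. Equivalently, \eqref{eq:phi_invariance_commutative_square} holds for all $g \in G$, so in particular it holds for all $h \in H$.

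I do not expect any real obstacle. The only point needing care is that $\Phi^H$ meets the standing free-and-proper assumption, which the closedness of $H$ handles.
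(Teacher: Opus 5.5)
Your proposal is correct and follows essentially the same route as the paper: use the closed subgroup theorem to make $\Phi^H$ a well-defined smooth action, note that the residual with respect to $\Phi^H$ is the restriction $\Delta|_{H\times\mathbb{U}}$, and conclude that it vanishes because $\Delta \equiv 0$. Your explicit check that $\Phi^H$ remains free and proper (properness because you restrict a proper map to the closed subset $H \times M$) is a small addition the paper only asserts in passing, and it is correct.
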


\begin{proof}
    Since $H$ is closed, it is a Lie subgroup of $G$, and thus $\Phi^H$ is both a smooth map and a well-defined Lie group action. 
    The result is then immediate, since the residual $\Delta$ with respect to $\Phi$ vanishes identically by assumption, and the residual with respect to $\Phi^H$ is nothing but its restriction ${\Delta |_{H\times\mathbb{U}}}$.
\end{proof}

On the contrary, the last proposition would become false if we replace both mentions of ``strong symmetry'' with ``weak symmetry''---in that case, the residual $\Delta$ need only take values in $\mathfrak{g}(M)$, and $\mathfrak{g}(M)$ is generally not contained within $\mathfrak{h}(M)$ (rather, it is the opposite inclusion that holds). Thus, for a given control system, not every closed subgroup of a weak symmetry yields another weak symmetry (and the same remark holds \textit{a fortiori} for partial symmetries). The following result
resolves this ambiguity, 
completely characterizing whether a closed subgroup of a weak symmetry is also a weak symmetry.

\begin{theorem}\label{prop:subgroups_of_weak_symmetry}

Let ${f : \mathbb{U} \to \X(M)}$ be weakly $\Phi$-invariant with respect to ${w : \mathbb{U} \to \X(G)}$, and 
fix any closed subgroup ${H \leq G}$.
Then, 
$f$ is weakly $\Phi^H$-invariant
if and only if 
${w_u(h) \in \T_h H}$ for all ${h \in H}$ and ${u \in \mathbb{U}}$.
\end{theorem}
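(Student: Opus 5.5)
The plan is to work directly with the residual and with the identity \eqref{eq:residual_as_infinitesimal_generator_with_W} from the proof of Prop.~\ref{prop:properties_of_symmetry_group_control_system}, namely $\Delta_{(g,u)} = \big(\diff \Left_{g^{-1}} \circ w_u(g)\big)_M$. First I will note that, since $H$ is closed, it is an embedded Lie subgroup with Lie algebra $\mathfrak{h} \subseteq \mathfrak{g}$. Then $\Phi^H$ is a free and proper action whose generators are the restrictions of $(\cdot)_M$ to $\mathfrak{h}$, so $\mathfrak{h}(M) = \{\eta_M : \eta \in \mathfrak{h}\} \subseteq \mathfrak{g}(M)$. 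As in the preceding proposition, the residual of $f$ with respect to $\Phi^H$ is simply the restriction $\Delta|_{H \times \mathbb{U}}$. By Def.~\ref{def:weak_invar_residual}, weak $\Phi^H$-invariance is therefore equivalent to $\Delta_{(h,u)} \in \mathfrak{h}(M)$ for all $h \in H$ and $u \in \mathbb{U}$.

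Next I will reduce this to a statement in the Lie algebra. For each $h \in H$, we have $\Delta_{(h,u)} = (\xi_{(h,u)})_M$ with $\xi_{(h,u)} = \diff\Left_{h^{-1}} \circ w_u(h) \in \mathfrak{g}$. Because $\Phi$ is free, the map $(\cdot)_M : \mathfrak{g} \to \X(M)$ is injective, as used in Prop.~\ref{prop:weak-invariant_system}. Hence $(\xi)_M \in \mathfrak{h}(M)$ if and only if $\xi \in \mathfrak{h}$: if $\xi_M = \eta_M$ for some $\eta \in \mathfrak{h}$, injectivity forces $\xi = \eta$. So weak $\Phi^H$-invariance holds if and only if $\diff \Left_{h^{-1}} \circ w_u(h) \in \mathfrak{h}$ for all $h \in H$ and $u \in \mathbb{U}$.

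Finally, I will use that $\Left_h$ maps $H$ into itself for $h \in H$. It follows that $\diff \Left_h$ restricts to a linear isomorphism $\mathfrak{h} = \T_e H \to \T_h H$, and hence $\diff\Left_{h^{-1}}(\T_h H) = \mathfrak{h}$. Consequently, $\diff\Left_{h^{-1}} \circ w_u(h) \in \mathfrak{h}$ if and only if $w_u(h) \in \T_h H$, which gives both directions at once.

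No step is deep. The only point needing care is the justification that the generator map of $\Phi^H$ is the restriction of that of $\Phi$, so that $\mathfrak{h}(M) \subseteq \mathfrak{g}(M)$ with the same injective map. For this I would differentiate $\Phi^x|_H = \Phi^x \circ \iota_H$ at $e$, where $\iota_H$ is the inclusion. For the converse direction, I would also remark that the map $\xi|_{H\times\mathbb{U}}$ is smooth, so that Prop.~\ref{prop:weak-invariant_system} applies to $\Phi^H$ with $w|_H$. Strictly speaking this is not needed, since Def.~\ref{def:weak_invar_residual} only requires membership in $\mathfrak{h}(M)$.
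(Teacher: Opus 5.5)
Your proposal is correct and follows essentially the same route as the paper: both start from $\Delta_{(h,u)} = \big(\diff\Left_{h^{-1}} \circ w_u(h)\big)_M$ for $h \in H$, reduce membership in $\mathfrak{h}(M)$ to $\diff\Left_{h^{-1}} \circ w_u(h) \in \mathfrak{h}$, and conclude via $\diff\Left_{h^{-1}}(\T_h H) = \mathfrak{h}$. Your proof also states two facts that the paper's proof uses without comment: that $(\cdot)_M$ is injective, and that the generators of $\Phi^H$ are the restrictions of those of $\Phi$.
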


\begin{proof}

Since $f$ is weakly $\Phi$-invariant with respect to ${w : \mathbb{U} \to \X(G)}$ and ${H \leq G}$, from \eqref{eq:definition_of_w_from_residual} we have that
\begin{align*}
    \Delta_{(h,u)}(x) = \diff \Phi^{x} \circ \diff \Left_{h^{-1}} \circ w_u(h)
\end{align*}
for all ${x \in M}$,  ${u \in \mathbb{U}}$, and ${h \in H}$.
Moreover, it is clear that
${\Delta_{(h,u)}}$ lies in ${\mathfrak{h}(M)} \leq {\mathfrak{g}(M)}$ if and only if for each 
${h \in H}$ and ${u \in \mathbb{U}}$, 
there exists 
${\xi \in \mathfrak{h} \leq \mathfrak{g}}$ such that for all ${x \in M}$,
\begin{equation*}
    \Delta_{(h,u)}(x) = \diff \Phi^x(\xi).
\end{equation*} 
To complete the argument, it suffices to observe that ${\diff \Left_{h^{-1}} \circ w_u(h) \in \mathfrak{h} \simeq \T_eH}$ if and only if ${w_u(h) \in \T_h H}$.
\end{proof}

This theorem also has a convenient corollary---given a weak symmetry, it is straightforward to extract the largest subgroup thereof that is a \textit{strong} symmetry of the same control system.

\begin{corollary}\label{cor:kernel_to_strong_symmetry}
    Suppose that 
    ${f : \mathbb{U} \to \X(M)}$ is weakly $\Phi$-invariant with respect to ${w : \mathbb{U} \to \X(G)}$. Then, 
    \begin{align}
        K 
        := \big\{k \in G : w_u(k) = 0_k \ \textrm{\normalfont for all} \ u \in \mathbb{U}\big\}
        \label{definition_of_kernel_of_w}
    \end{align}
    is a closed subgroup of $G$. Moreover, $K$ is the largest subgroup of $G$ such that
    $\Phi^K$ is a strong symmetry of $f$.
\end{corollary}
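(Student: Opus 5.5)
Three things need to be shown: that $K$ is a subgroup, that it is closed, and that it is the largest subgroup of $G$ on which $\Phi$ restricts to a strong symmetry. All three follow from the group linearity of each $w_u$, established in Prop.~\ref{prop:properties_of_symmetry_group_control_system}(\ref{item:w_in_aut_G}), together with the residual formula \eqref{eq:residual_as_infinitesimal_generator_with_W}.

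\textbf{Step 1: $K$ is a subgroup.} Fix $u \in \mathbb{U}$. Taking $g=h=e$ in \eqref{definition_of_group_linear_vector_field} gives $w_u(e) = 2w_u(e)$, so $w_u(e)=0_e$ and $e \in K$. For $k_1,k_2 \in K$,
\begin{equation*}
w_u(k_1k_2) = \diff\Left_{k_1}\circ w_u(k_2) + \diff\Right_{k_2}\circ w_u(k_1) = 0_{k_1k_2},
\end{equation*}
so $K$ is closed under multiplication. For inverses, the proof of Lemma~\ref{lem:group_affine_error_dynamics} derived
\begin{equation*}
w_u(k^{-1}) = -\diff\Right_{k^{-1}}\diff\Left_{k^{-1}} w_u(k),
\end{equation*}
which vanishes whenever $k \in K$.

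\textbf{Step 2: $K$ is closed.} For each fixed $u$, the set $\{k : w_u(k)=0_k\}$ is the preimage of the zero section of $\T G$ under the continuous map $w_u$. The zero section is closed in $\T G$, so this set is closed in $G$. Then $K$ is an intersection of closed sets over $u \in \mathbb{U}$, hence closed. By the closed subgroup theorem, $K$ is an embedded Lie subgroup and $\Phi^K$ is a free and proper action.

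\textbf{Step 3: $\Phi^K$ is a strong symmetry.} The residual of $f$ with respect to $\Phi^K$ is the restriction of $\Delta$ to $K\times\mathbb{U}$. By \eqref{eq:residual_as_infinitesimal_generator_with_W},
\begin{equation*}
\Delta_{(k,u)} = \big(\diff\Left_{k^{-1}}\circ w_u(k)\big)_M = 0 \quad \text{for all } k \in K,\ u \in \mathbb{U},
\end{equation*}
so $\Phi^K$ is a strong symmetry.

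\textbf{Step 4: $K$ is the largest such subgroup.} Let $H \le G$ be any subgroup for which $\Phi^H$ is a strong symmetry, i.e.\ $\Delta_{(h,u)}=0$ for all $h\in H$ and $u \in \mathbb{U}$. Since $\Phi$ is free, $(\cdot)_M$ is injective, and $\diff\Left_{h^{-1}}$ is a linear isomorphism $\T_hG \to \mathfrak{g}$. Hence, by \eqref{eq:residual_as_infinitesimal_generator_with_W}, $\Delta_{(h,u)}=0$ forces $w_u(h)=0_h$. Therefore $H\subseteq K$.

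Alternatively, Theorem~\ref{prop:subgroups_of_weak_symmetry} identifies the weak symmetry of $\Phi^K$ with the restricted field $w|_K$, which is zero, and Prop.~\ref{prop:properties_of_symmetry_group_control_system} then gives strong invariance.

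\textbf{Expected obstacle.} The only delicate point is that maximality should be asserted over \emph{closed} subgroups, so that $\Phi^H$ is a genuine free and proper Lie group action. This is harmless for the inclusion $H \subseteq K$: that argument is pointwise and never uses closedness of $H$. The remaining steps are routine.
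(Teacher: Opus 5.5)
Your proof is correct and follows essentially the same route as the paper. Group linearity of each $w_u$ gives the subgroup property, and closedness comes from intersecting preimages of the zero section. Maximality follows from the residual formula $\Delta_{(g,u)} = (\diff\Left_{g^{-1}}\circ w_u(g))_M$ together with injectivity of $\diff\Left_{h^{-1}}$ and $(\cdot)_M$. The only cosmetic difference is in how you get strong invariance of $\Phi^K$: you read it directly off that residual formula, whereas the paper goes through Theorem~\ref{prop:subgroups_of_weak_symmetry} and Prop.~\ref{prop:properties_of_symmetry_group_control_system}, which you also mention as an alternative.
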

\begin{proof} We first verify that $K$ is in fact a subgroup of $G$.
    Recalling from Prop.~\ref{prop:properties_of_symmetry_group_control_system} that  ${w_u \in \aut(G)}$ for all $u \in \mathbb{U}$, we compute for any ${k_1, k_2 \in K}$ and ${u \in \mathbb{U}}$ that
    \begin{align*}
        w_u(k_1 k_2) &= \diff \Left_{k_1} \circ \underbrace{w_u(k_2)}_{=0_{k_2}} + \,
        \diff \Right_{k_2} \circ \underbrace{w_u(k_1)}_{=0_{k_1}}
        = 0_{k_1k_2}.
    \end{align*}
    Thus, $K$ is preserved by the group operation in $G$. 
    Moreover, 
        \begin{align*}
        w_u(e) = w_u(e e) &= \diff \Left_{e} \circ w_u(e) + 
        \diff \Right_{e} \circ w_u(e) = 2 \,  w_u(e).
    \end{align*}
Thus, $w_u(e) = 0_e$ for all $u \in \mathbb{U}$,
hence $K$ is non-empty. 
    Finally, for any ${k \in K}$, we have
    \begin{align*}
        0_e = 
        w_u(k k^{-1}) &= \diff \Left_{k} \circ w_u(k^{-1}) + \,
        \diff \Right_{k^{-1}} \circ \underbrace{w_u(k)}_{=0_{k}}.
    \end{align*}
    Thus, ${w_u(k^{-1}) = - \diff \Left_{k^{-1}} \circ \diff \Right_{k^{-1}}(0_k) = 0_{k^{-1}}}$, and therefore ${k^{-1} \in K}$, as required. 
    To show that $K$ is closed, we note that $K$ can be expressed in the form
    \begin{equation}
        K = \bigcap_{u \, \in \, \mathbb{U}} K_u, \quad K_u := w_u{}^{-1}\big(\{0_g : g \in G\}\big).
    \end{equation}
    The zero section of any tangent bundle is a closed subset thereof, and the preimage of a closed subset through a continuous map is as well. Thus, $K$ is the intersection of (uncountably many) closed subsets of $G$, and it is thus itself closed.

    By Theorem~\ref{prop:subgroups_of_weak_symmetry},  $f$ is thus weakly {$\Phi^K$-invariant}---in particular, with respect to ${w^K \equiv 0 \in \X(K)}$, since the restriction of ${w : \mathbb{U} \to \X(G)}$ to a control system ${w^K : \mathbb{U} \to \X(K)}$ is well-defined and vanishes by construction. Thus,  by Prop.~\ref{prop:properties_of_symmetry_group_control_system}, $f$ is in fact \textit{strongly} $\Phi^K$-invariant.
    It remains to show that $K$ is the largest such subgroup.
    Consider any other subgroup ${H \leq G}$ for which ${\Phi^H := \Phi |_{H \times M}}$ is a well-defined smooth Lie group action and, moreover, a strong symmetry of $f$. From Prop.~\ref{prop:properties_of_symmetry_group_control_system} and \eqref{eq:residual_as_infinitesimal_generator_with_W}, it follows that for all ${h \in H}$ and ${u \in \mathbb{U}}$, we have
    \begin{equation}
        \big(\diff \Left_{h^{-1}} \circ w_u(h)\big)_M = 0 \in \mathfrak{g}(M).
        \label{eq:vanishing_subgroup_generator}
    \end{equation}
    For each ${h \in H}$,
     the map ${\diff \Left_{h^{-1}} : \T G \to \T G}$ restricts to a linear isomorphism ${\T_h G \to \mathfrak{g}}$. Moreover, since $\Phi$ is effective, the map
${(\cdot)_M : \mathfrak{g} \to \X(M)}$ restricts to a linear isomorphism ${\mathfrak{g} \to \mathfrak{g}(M)}$. Thus, \eqref{eq:vanishing_subgroup_generator} implies that 
    ${w_u(h) = 0_h}$ for all ${u \in \mathbb{U}}$. Hence, ${h \in K}$ and moreover ${H \subseteq K}$, as required.
\end{proof}

We summarize the situation as follows. Any subgroup of a strong symmetry is also a strong symmetry, requiring no further analysis. 
For partial symmetry, checking whether or not a subgroup ${H \leq G}$ also induces a partial symmetry requires checking that $\Delta_{(h,u)}(x)$ is everywhere tangent to the orbits (not only for every ${h \in H}$ and ${u \in \mathbb{U}}$, but also for every ${x \in M}$), necessitating analysis of data defined on the state space. 
In contrast, for a weak symmetry, one may 
determine whether a closed subgroup is also a weak symmetry by simply checking 
that $w_u(h)$ is tangent to $H$ 
for all ${h \in H}$ and ${u \in \mathbb{U}}$. 
It is worth stressing that this procedure for weak symmetries involves the analysis of  data defined on the symmetry group alone and \textit{not} on the state space, unlike for partial symmetry.

\section{Application to an Aerial Vehicle}
\label{sec:aerial_vehicle}

\begin{figure}[!t]

\def\rotorscale{1.3}
    \centering
  \resizebox{\columnwidth}{!}{%
    \begin{tikzpicture}[
      font=\small,
      line cap=round,
      line join=round
    ]

  \coordinate (O) at (.5,1.5);      %
  \coordinate (p) at (6.10,3.75);    %
  \coordinate (q) at (9,2.9);   %

  \coordinate (r30)  at (6.10 + 7.079*1.1-6.10*1.1,3.75 + 2.759*1.1 - 3.75*1.1);
  \coordinate (r90)  at (4.905,2.338);
  \coordinate (r150) at (4.1,3.4);
  \coordinate (r210) at (6.10 + 5.121*1.1 - 6.10*1.1,3.75 + 4.741*1.1 - 3.75*1.1);
  \coordinate (r270) at (7.295,5.162);
  \coordinate (r330) at (8.274,4.171);

    \draw[line width=1.5pt, black!90] (p) -- (q);

    \draw[line width=1.5pt, black!90] (r30) -- (r210);
    \draw[line width=1.5pt, black!90] (r90) -- (r270);
    \draw[line width=1.5pt, black!90] (r150) -- (r330);

  \DrawRotor{r210}{-25}{\rotorscale*.64cm}{\rotorscale*.27cm}
  \DrawRotor{r30} {-25}{\rotorscale*.64cm}{\rotorscale*.27cm}
  \DrawRotor{r270}{ 25}{\rotorscale*.64cm}{\rotorscale*.27cm}
  \DrawRotor{r90} { 25}{\rotorscale*.64cm}{\rotorscale*.27cm}
  \DrawRotor{r150}{ 0}{\rotorscale*.58cm}{\rotorscale*.50cm}
  \DrawRotor{r330}{ 0}{\rotorscale*.58cm}{\rotorscale*.50cm}

\DrawRotorSpin{r30} {-25}{\rotorscale*.64cm}{\rotorscale*.27cm}{-360}{90} %
\DrawRotorSpin{r90} { 25}{\rotorscale*.64cm}{\rotorscale*.27cm}{230}{-290} %
\DrawRotorSpin{r150}{  0}{\rotorscale*.60cm}{\rotorscale*.52cm}{-315}{112} %
\DrawRotorSpin{r210}{-25}{\rotorscale*.64cm}{\rotorscale*.27cm}{300}{-140} %
\DrawRotorSpin{r270}{ 25}{\rotorscale*.64cm}{\rotorscale*.27cm}{-320}{130} %
\DrawRotorSpin{r330}{  0}{\rotorscale*.60cm}{\rotorscale*.52cm}{135}{-291} %

  \begin{pgfonlayer}{front}

    \draw[axis, FrameBlue] (p) -- ++(   0.9596 *1.1,  -0.2813*1.1)
      node[pos=1.05,above=4pt,label backing,text=FrameBlue] {$y_\mathcal{B}$};
\draw[axis, FrameBlue] (p) -- ++(-.775*.85,-.916*.85)
        node[pos=1.2,above=8pt,label backing,text=FrameBlue, xshift=-.05cm] {$x_\mathcal{B}$};
    \draw[axis, FrameBlue] (p) -- ++(0,1)
    node[pos=1,above=2pt,label backing,text=FrameBlue] {$z_\mathcal{B}$};

    \draw[ToolOrange,axis] (q) -- ++(    0.9596 *1.1,  -0.2813*1.1)
      node[anchor=north west,label backing,text=ToolOrange] {$z_\mathcal E$};
    \draw[ToolOrange,axis] (q) -- ++(0,-1.1)
      node[anchor=north east,label backing,text=ToolOrange] {$x_\mathcal E$};
    \draw[ToolOrange,axis] (q) -- ++(.775*.85,.916*.85)
      node[anchor=south east,label backing,text=ToolOrange] {$y_\mathcal E$};

    \ifshowgravity
      \draw[force] (2.,5.30) -- ++(0,-1.85)
        node[pos=.5,right=7pt,label backing] {\large $\mathrm{g}$};
    \fi

    \DrawQuarteredHub{p}
    \node[below=2pt,label backing] at ($(p)+(-.02,-.22)$) {$p$};

    \fill[black] (q) circle[radius=4pt];

    \node[below=5pt,label backing, xshift=-.2cm] at (q) {$q$};

    \ifshowinertialframe
      \fill[black!85] (O) circle[radius=3.5pt, xshift=.01cm, yshift=.01cm];
      \draw[axis, black!85] (O) -- ++(1.25,0)
        node[anchor=west,text=black] {$x_\mathcal I$};
      \draw[axis, black!85] (O) -- ++(.79,.88)
        node[anchor=south west,text=black, yshift=-.2cm] {$y_\mathcal I$};
      \draw[axis, black!85] (O) -- ++(0,1.35)
        node[anchor=south,text=black] {$z_\mathcal I$};
    \fi

    \ifshowequations
      \node[equation box,anchor=north west] at (.22,6.25)
        {$\displaystyle
          \begin{aligned}
            q   &= p + R\,d,\\
            R_q &= R\,S.
          \end{aligned}$};
    \fi
  \end{pgfonlayer}
    \end{tikzpicture}%
  }

  \caption{An aerial vehicle under the influence of gravity (cf. \cite{Rajappa2015} or \cite{Brescianini2016}). The body-fixed frame $\mathcal{B}$ attached at its center of mass has position ${p \in \mathbb{R}^3}$ and orientation ${R\in\mathrm{SO}(3)}$ relative to the inertial frame $\mathcal I$. Another body-fixed frame $\mathcal{E}$ is attached at its end effector, with position ${q = p+Rd \in \mathbb{R}^3}$ and orientation ${R S\in\mathrm{SO}(3)}$ for constant offsets ${d \in \mathbb{R}^3}$  and ${S\in\mathrm{SO}(3)}$.  This system admits a nine-dimensional weak symmetry, which has as a subgroup the familiar four-dimensional strong symmetry corresponding to 
  gravity-preserving rigid displacements.
  }
  \label{fig:aerial-vehicle}
\end{figure}

This section studies strong, weak, and partial symmetries of an aerial vehicle \cite{Brescianini2016, Rajappa2015}, consisting of a single rigid body subject to gravity and arbitrary body-fixed control forces and torques  
(see Fig.~\ref{fig:aerial-vehicle}). 
The system can be modeled as a mechanical system with configuration manifold $\mathrm{SE}(3)$ and a left-invariant kinetic energy metric (see \cite{BulloAndLewis2004}), subject to symmetry-breaking potential forces.
The state ${x = (R, p, \nu, \omega) \ \in  \ \mathrm{T}\mathrm{SE}(3)\simeq \mathrm{SO}(3) \times \R^3 \times \R^3 \times \R^3}$ consists of the orientation ${R \in \mathrm{SO}(3)}$, the world-frame position ${p \in \R^3}$ of the center of mass, the body-frame linear velocity ${\nu \in \mathbb{R}^3}$, and the body-frame angular velocity ${\omega \in \mathbb{R}^3}$.
The dynamics  ${f : \mathbb{R}^6 \to \X\big(\mathrm{T}\mathrm{SE}(3)\big)}$ are given by
\begin{align}\label{eq:rigid_body_vf}
    f_u(R, p, \nu, \omega) = \begin{pmatrix}
        R\omega_\times \\[3pt]
        R\nu \\[3pt]
        \tfrac{1}{m} F - \mathrm{g}\, R^\top \mathrm{e}_3 - \omega \times \nu \\[3pt]
        \mathbb{J}^{-1}(\mu - \omega \times \mathbb{J}\omega)
    \end{pmatrix},
\end{align}
where $\omega_\times$ denotes the linear map satisfying ${\omega_\times b = \omega\times b}$ for all ${b\in\R^3}$. 
The control inputs ${u =(\mu, F) \in \R^6 \simeq \R^3 \times \R^3}$ are the body-frame external moment and force, respectively, while ${m > 0}$ is the mass, $\mathbb{J}$ is the moment of inertia tensor, $\mathrm{g}$ is the magnitude of acceleration due to gravity, and ${\mathrm{e}_3 = (0,0,1)}$ is the vertical direction.

\subsection{Symmetries of the Aerial Vehicle}

It is well known (cf. \cite{welde2025leveragingsymmetryacceleratelearning}) that the aerial vehicle described by  \eqref{eq:rigid_body_vf} admits a  strong symmetry corresponding to the four-dimensional subgroup of $\mathrm{SE}(3)$ that preserves the gravity vector. In this section, we show that \eqref{eq:rigid_body_vf} also admits a larger nine-dimensional weak symmetry (of which the strong symmetry is a subgroup) and a twelve-dimensional partial symmetry (of which the weak symmetry is a subgroup). The weak symmetry and the  identification of the strong symmetry as a subgroup thereof are novel contributions of this work.

\subsubsection{Partial Symmetry}
The state manifold $\T \mathrm{SE}(3)$ inherits a Lie group structure induced by the derivative of the usual group operation in $\mathrm{SE}(3)$. In particular, the tangent group operation is given for any group elements $g, h \in \T\mathrm{SE}(3)$ by
\begin{align*}
    &(R_g, p_g, \nu_g, \omega_g) 
    (R_h, p_h, \nu_h, \omega_h) = \\ 
    & \ \ \big(
        R_g R_h, 
        p_g + R_g p_h, 
        R_h^\top (\nu_g + \omega_g \times p_h) + \nu_h, 
        \omega_h + R_h^\top \omega_g
    \big).    
\end{align*}
By Prop.~\ref{every_lie_group_system_has_partial_symmetry}, the left translation action in $\mathrm{TSE}(3)$ is automatically a partial symmetry of \eqref{eq:rigid_body_vf}; however, one may check that it is not a weak (nor, therefore, strong) symmetry. 

\subsubsection{Weak Symmetry} 

The Lie group $\mathrm{SE}_2(3)$, first formalized in the context of inertial navigation in \cite{Barrau2017}, is isomorphic to a subgroup of the tangent group $\mathrm{TSE}(3)$ obtained by setting ${\omega_g = 0}$.
In particular, we express each group element as a tuple ${g = (R_g, p_g, v_g)  \in \mathrm{SE}_2(3)}$, where ${R_g \in \mathrm{SO}(3)}$ and ${p_g, v_g \in \mathbb{R}^3}$, and the group operation is given for any group elements ${g, h \in \mathrm{SE}_2(3)}$ by
\begin{align*}
    (R_g, p_g, v_g)(R_h, p_h, v_h) = (R_g R_h, p_g + R_g p_h, v_g + R_g v_h).
\end{align*}
The identification of $\mathrm{SE}_2(3)$ with a subgroup of $\mathrm{TSE}(3)$ induces an action ${\Phi:\mathrm{SE}_2(3)\times \mathrm{T}\mathrm{SE}(3)\to \mathrm{T}\mathrm{SE}(3)}$ given by 
\begin{align}
    \Phi_g(R, p, \nu, \omega)     &= \left(
        R_g R,\,
        p_g + R_g p,\,
        \nu + R^\top R_g^\top v_g,\,
        \omega
\right), \label{eq:SE23_action}
\end{align}
which is a weak symmetry of the aerial vehicle dynamics \eqref{eq:rigid_body_vf}.
This may be verified by computing the residual and showing it 
is an infinitesimal generator of $\Phi$.
To this end, let ${g \in \mathrm{SE}_2(3)}$ and ${x = (R, p, \nu, \omega) \in \mathrm{TSE}(3)}$ be arbitrary.
Then, evaluating the vector field~\eqref{eq:rigid_body_vf} at the transformed state $\Phi_g(x)$ yields
\begin{align*}
    f_u\big(\Phi_g(x)\big) = \begin{pmatrix}
        R_g R \omega_\times \\[3pt]
        v_g+R_g R\nu \\[3pt]
        \tfrac{1}{m} F - \mathrm{g}\, R^\top R_g^\top \mathrm{e}_3 - \omega \times (\nu + R^\top R_g^\top v_g) \\[3pt]
        \mathbb{J}^{-1}(\mu - \omega \times \mathbb{J}\omega)
    \end{pmatrix}.
\end{align*}
Also, for any
 tangent vector
${\dot x = (\dot R,\dot p,\dot \nu,\dot \omega)\in \mathrm{T}_x\big(\mathrm{TSE}(3)\big)}$, the differential $\diff \Phi_{g^{-1}}$ is given by 
\begin{align*}
    \diff\Phi_{g^{-1}}(\dot x) = \left(
        R_g^\top \dot R,\,
        R_g^\top \dot p,\,
       \dot\nu-\dot R^\top v_g,\,
        \dot \omega
    \right).
\end{align*}
Combining these facts, the residual is computed to be
\begin{align}
    \Delta_{(g,u)}(x)  
    &= \diff \Phi_{g^{-1}} \circ f_u \circ \Phi_g (x) - f_u (x) \notag \\
    &= \left(0,\,R_g^\top v_g,\mathrm{g}R^\top\mathrm{e}_3-\mathrm{g}R^\top R_g^\top \mathrm{e}_3,\,0\right). \label{eq:residual}
\end{align}
To see that this
is in fact an element of $\mathfrak{g}(M)$ for ${\mathfrak{g} = \mathfrak{se}_2(3)}$ and ${M = \mathrm{TSE}(3)}$,
compute the infinitesimal generator of any 
${\xi = (\xi_R,\xi_p,\xi_v)}$  ${\in \mathfrak{se}_2(3)}$ as
\begin{align}\label{eq:infinitesimal_generator}
    \xi_{\raisebox{-2pt}{\scriptsize$\T \mathrm{SE}(3)$}} (x) = \mathrm{d}\Phi^x(\xi) = (\xi_R R, \xi_p+\xi_R p, R^\top \xi_v,0).
\end{align}
Defining $\xi : \mathrm{SE}_2(3) \to \mathfrak{se}_2(3)$ by
\begin{align}\label{eq:lie_algebra_generator}
    \xi(g) = \big(0,\; R_g^\top v_g,\; \mathrm{g}(\mathrm{I}_{3\times 3}-R^\top_g)\mathrm{e}_3\big) \in \mathfrak{se}_2(3),
\end{align}
one may
recover  \eqref{eq:residual} by applying
\eqref{eq:infinitesimal_generator} to \eqref{eq:lie_algebra_generator}, 
verifying that ${\Delta_{(g,u)} = \big(\xi(g)\big)_{\T \mathrm{SE}(3)}}$. 

The residual $\Delta$ given in \eqref{eq:residual} is autonomous (\textit{i.e.}, independent of $u$), and thus the associated $w \in \aut\big(\mathrm{SE}_2(3)\big)$ must also be autonomous (Prop.~\ref{prop:properties_of_symmetry_group_control_system}).
In particular, $w$ is computed as
\begin{align}
    w(R_g,p_g, v_g)
    &= \diff\Left_g\circ\xi(g) 
    = (0, v_g, \mathrm{g}R_g\mathrm{e}_3 - \mathrm{g}\mathrm{e}_3),
    \label{eq:rigid_body_W}
\end{align}
as in Prop.~\ref{prop:weak-invariant_system}.
In summary, this shows that the $\mathrm{SE}_2(3)$-action $\Phi$ given in \eqref{eq:SE23_action} is a weak symmetry of the aerial vehicle dynamics $f$ given in \eqref{eq:rigid_body_vf} with respect to $w$ given in \eqref{eq:rigid_body_W}.

\subsubsection{Strong Symmetry}
Having identified the weak symmetry \eqref{eq:SE23_action}, the largest strong symmetry contained therein can be extracted using Corollary~\ref{cor:kernel_to_strong_symmetry}.
In particular, the closed subgroup defined in \eqref{definition_of_kernel_of_w} specializes to
\begin{align*}
    K &= \{ g \in \mathrm{SE}_2(3) : w(g) = 0\} \\
    &= \{ (R_k,p_k,v_k)\in \mathrm{SE}_2(3) : (0, v_k, \mathrm{g}R_k\mathrm{e}_3 - \mathrm{g}\mathrm{e}_3) = 0_k\}.
\end{align*}
Clearly, ${g \in K}$ if and only if ${v_k = 0}$ and ${R_k\mathrm{e}_3 = \mathrm{e}_3}$. 
This restricts $R_k \in \mathrm{SO}(3)$ to a rotation about $\mathrm{e}_3$, and thus
\begin{equation*}
    K = \left\{ \left( 
\setlength{\arraycolsep}{2pt}
\begin{bmatrix} R_z & 0 \\ 0 & 1 \end{bmatrix}
    , p_k, 0 \right) \in \mathrm{SE}_2(3) : R_z \in \mathrm{SO}(2), \, p_k \in \mathbb{R}^3 \right\}
    \hspace{-2pt} .
\end{equation*}
This shows that $K$ is isomorphic to the semidirect product ${\mathbb{S}^1 \ltimes \mathbb{R}^3}$, and thus we obtain a strong symmetry ${\Theta: ({\mathbb{S}^1 \ltimes \mathbb{R}^3}) \times \mathrm{T}\mathrm{SE}(3)\to \mathrm{T}\mathrm{SE}(3)}$ of the aerial vehicle. 
This is exactly the familiar four-dimensional strong symmetry group of a rigid body under gravitational influence \cite{welde2025leveragingsymmetryacceleratelearning}.

\subsection{Trivialization of the State Manifold}\label{subsec:rigid_body_decomp}

The weak $\Phi$-invariance of \eqref{eq:rigid_body_vf} can be used to derive a number of structural properties, including a cascade decomposition (as in Section~\ref{sec:cascade_decompositions}) and group affine error dynamics (as in Section~\ref{sec:error_dynamics}).
The dependence of these results on a chosen section (see Section~\ref{subsec:principal_bundles}) is described below.

The weak symmetry $\Phi$ can be written as the restriction of left translations on $\mathrm{TSE(3)}$ to a subgroup isomorphic to $\mathrm{SE}_2(3)$, implying that it is both free and proper.
This means that the bundle projection is a well-defined surjective submersion which may be written as
\begin{align*}
    \pi\colon \mathrm{TSE}(3) \to \T \mathrm{SE}(3) / \mathrm{SE}_2(3)
    \simeq \mathbb{R}^3, 
    \  (R,p,\nu,\omega) \mapsto \omega.
\end{align*}
Choosing a global section ${s:\mathbb{R}^3\to \mathrm{TSE}(3)}$ will determine a global trivialization
${x = \Phi_{g} \circ s(\omega) \in \mathrm{TSE(3)}}$, decomposing the state $x$ into the angular velocity ${\omega = \pi(x) \in \mathbb{R}^3}$ and the group element ${g = (R_g,p_g,v_g) \in \mathrm{SE}_2(3)}$, where the physical interpretation of  $g$ is determined by the choice of section. 
To illustrate the impact of this choice, 
consider another body-fixed reference frame 
(\textit{e.g.}, attached to an onboard sensor or end effector, 
as in Fig.~\ref{fig:aerial-vehicle}) 
 shifted relative to the center of mass by a body-frame translation ${d \in \R^3}$ and a rotational offset $S \in \mathrm{SO}(3)$.
We define the section%
\begin{align}
    s(\omega) := (S^\top,-S^\top d,-\omega_\times d, \omega),
\end{align}
such that the trivialization imposes the constraint
\begin{align*}
    (R,p,\nu,\omega) 
    &= (R_gS^\top,p_g-R_gS^\top d, SR_g^\top v_g-\omega_\times d, \omega).
\end{align*}
Solving these equations for the components of $g$ yields
\begin{align}
    R_g = R S, \quad
    p_g = p +R d, \quad
    v_g = R (\nu +\omega \times d),
\end{align}
so that $g$ coincides exactly with the attitude, world-frame position, and world-frame velocity of this other body-fixed frame.
Letting $d = 0$ and $S = \mathrm{I}_{3 \times 3}$, the section becomes
\begin{align}\label{eq:rigid_body_COM_section}
    s_{\text{CoM}}(\omega) = (\mathrm{I}_{3\times 3},0,0,\omega),
\end{align}
and the group states are given by
\begin{align}
      R_g = R, \quad 
      p_g = p, \quad
      v_g = R \nu,
      \label{eq:COM_section_states}
\end{align}
that is,  we recover the vehicle attitude as well as the position and linear velocity of the vehicle's center of mass in world coordinates.
For the remainder, we adopt the section ${s := s_{\text{CoM}}}$ for simplicity of exposition.
However, we emphasize that any other choice of section is equally valid within the framework of weak invariance, providing flexibility to the designer to make a selection suitable for their particular control problem.

\subsection{Cascade Decomposition}

Using the chosen section \eqref{eq:rigid_body_COM_section} and Theorem~\ref{thm:weak_cascade}, we derive a cascade decomposition of the dynamics \eqref{eq:rigid_body_vf}.
First, the dynamics in the quotient space ${\T \mathrm{SE}(3) / \mathrm{SE}_2(3)}$ are given by specializing \eqref{eq:cascade_ydot_weak} and \eqref{define_projected_control_system} to \eqref{eq:rigid_body_vf}, i.e.
\begin{align}\label{eq:rigid_f_tilde}
        \dot{\omega} 
        &=\tilde f_u \circ \pi(x) 
    = \diff\pi \circ f_u(x) 
    = \mathbb{J}^{-1}(\mu - \omega \times \mathbb{J}\omega).
\end{align}
Meanwhile, the group dynamics are given by \eqref{eq:cascade_gdot_weak} with $w$ given as in \eqref{eq:rigid_body_W}.
To compute the left-invariant term, \eqref{eq:define_principal_connection_one_form_flat_connection} is used to evaluate ${\mathcal{A}\colon \T (\T \mathrm{SE}(3)) \to \mathfrak{se}_2(3)}$ on $f_u$ at the section,
\begin{align*}
   \mathcal{A} & \circ f_u \circ s(\omega)
   = \bigl(\diff\Phi^{s(\omega)}\bigr)^{-1}
      \circ \ver \circ f_u\circ s(\omega) 
      \\
   &
   = \bigl(\diff\Phi^{s(\omega)}\bigr)^{-1}
      \bigl(f_u\circ s(\omega)
            -  \diff \Phi_e \circ \diff s\circ \diff\pi\circ f_u\circ s(\omega)\bigr) \\
   &= \bigl(\diff\Phi^{s(\omega)}\bigr)^{-1}
       \bigl(\omega_\times,\,0,\,
            \tfrac{1}{m}F-\mathrm{g} \mathrm{e}_3,\,0\bigr) \\
   &= \bigl(\omega_\times,\,0,\,
            \tfrac{1}{m}F-\mathrm{g} \mathrm{e}_3\bigr)
      \;\in\;\mathfrak{se}_2(3). 
\end{align*}
Thus, the left-invariant term \eqref{eq:left_invariant_component_along_orbits} is given by
\begin{align}
\nonumber
v_{(\omega,u)}(g) &= \diff \Left_g \bigl(\omega_\times,\,0,\,
            \tfrac{1}{m}F-\mathrm{g} \mathrm{e}_3\bigr)
    \\&= \big(R_g\omega_\times,0,R_g(\tfrac{1}{m}F-\mathrm{ge}_3)\big).
    \label{eq:rigid_body_V}
\end{align}
Combining this with the group linear term \eqref{eq:rigid_body_W} yields
\begin{align}
    \dot g 
    &=\overbrace{\big(0, v_g, \mathrm{g}R_g\mathrm{e}_3 - \mathrm{g}\mathrm{e}_3\big)}^{w \, \in \, \aut(\mathrm{SE}_2(3))} + \overbrace{\big(R_g\omega_\times,0,R_g(\tfrac{1}{m}F-\mathrm{ge}_3)\big)}^{v_{(\omega,u)}  \, \in \, \mathfrak{left}(\mathrm{SE}_2(3))} \nonumber \\
    &= (R_g\omega_\times, v_g, \tfrac{1}{m}R_gF-\mathrm{ge}_3). \label{eq:spatial_frame_g_dyn}
\end{align}
Recalling \eqref{eq:COM_section_states} and defining ${v = R \nu}$, the cascade \eqref{eq:cascade_ydot_weak}-\eqref{eq:cascade_gdot_weak} takes the form
\begin{subequations}
\label{eq:cascade_spatial}
\begin{align} 
     \dot \omega &= \mathbb{J}^{-1}(\mu - \omega \times \mathbb{J}\omega),
     \label{eqn:cascade_aerial_vehicle_ydot}
     \\
     (\dot R,\dot p,\dot v) &=(R\omega_\times, v, \tfrac{1}{m}RF-\mathrm{ge}_3). \label{eqn:cascade_aerial_vehicle_gdot}
\end{align} 
\end{subequations}
Thus, the weak symmetry and chosen section yield a decomposition in which Euler's rotation equation governs the independent evolution of the angular velocity ${\omega \in \R^3}$, which cascades into the  ``inertial navigation system'' dynamics (which are famously group affine \cite{Barrau2017}) governing the ``extended pose'' ${g \in \mathrm{SE}_2(3)}$ consisting of the aerial vehicle's position, orientation, and linear velocity. 

We contrast \eqref{eqn:cascade_aerial_vehicle_ydot}-\eqref{eqn:cascade_aerial_vehicle_gdot} with the more traditional decomposition that splits the system into its pose ${(R,p) \in \mathrm{SE}(3)}$ and twist ${(\nu, \omega) \in \mathbb{R}^6 \simeq \mathfrak{se}(3)}$ (as in the Euler-Poincar\'{e} equations \cite{poincare1901_euler_poincare_equation,BulloAndLewis2004}). 
The classical decomposition corresponds (via Lemma~\ref{thm:feedback_interconnection}) to the usual action of $\mathrm{SE}(3)$ on $\mathrm{TSE}(3)$, which leaves the body-frame twist unaltered, and it is a strong symmetry for systems with external forces strictly in the body frame.
However, due to the symmetry-breaking force of gravity, this action of $\mathrm{SE}(3)$ is not even a partial symmetry of \eqref{eq:rigid_body_vf}, and hence the associated decomposition is not a cascade (\textit{i.e.}, the linear acceleration $\dot{\nu}$ depends on the orientation $R$).

\subsection{Error Dynamics}

As shown in Theorem~\ref{thm:error_dynamics}, the weak symmetry $\Phi$ can be used to reduce the error dynamics governing the relative motion of two trajectories ${x(t), \hat{x}(t) \in \mathrm{TSE}(3)}$ for the aerial vehicle under respective control inputs ${u(t), \hat{u}(t) \in \mathbb{R}^6}$.
This can be useful, for example, when $\hat x(t), \hat u(t)$ is a reference trajectory and the goal is to drive the system's state $x(t)$ asymptotically to $\hat{x}(t)$ by choosing $u(t)$. 

Using the section given in \eqref{eq:rigid_body_COM_section} as before, the states $x,\hat{x}$ decompose as $x = \Phi_{(R,p,v)}\circ s(\omega)$ and $\hat{x} = \Phi_{(\hat{R},\hat{p},\hat{v})}\circ s(\hat{\omega})$, respectively.
The error state $\tilde{g} := \hat{g}^{-1} g$  is thus given by
\begin{align}
     \tilde{g} = (\tilde{R}, \tilde{p}, \tilde{v}) 
    &= \bigl(\hat{R}^\top R,\; \hat{R}^\top(p - \hat{p}),\; \hat R^\top(v-\hat v)\bigr),
\end{align}
and its dynamics can be computed either directly or by specializing \eqref{eq:error_g} or \eqref{eq:alternate_form_of_group_error_dynamics} to the present example.
To write the error dynamics explicitly, note from \eqref{eq:rigid_body_V} that
\begin{align*}
    \diff \Left_{\tilde{g}} \circ v_{(\omega,u)}(e) &= 
    (\tilde{R} \omega_\times, 0, \tfrac{1}{m}\tilde{R} F - \mathrm{g}\tilde{R}\mathrm{e}_3),
    \\ \diff \Right_{\tilde{g}} \circ v_{(\hat{\omega},\hat{u})}(e) &= 
    (\hat\omega_\times \tilde{R}, \hat\omega_\times \tilde{p}, \tfrac{1}{m} \hat F - \mathrm{g} \mathrm{e}_3+ \hat\omega_\times \tilde{v}).
\end{align*}
Inserting these, as well as \eqref{eq:rigid_body_W}, into \eqref{eq:alternate_form_of_group_error_dynamics} provides
\begin{subequations}
    \begin{align}
    \dot{\tilde{R}} &=
        \tilde{R} \omega_\times - \hat\omega_\times \tilde{R},
    \\
    \dot{\tilde{p}} &=
    \tilde{v} - \hat\omega_\times \tilde{p},
    \\
    \dot{\tilde{v}} &= 
    \tfrac{1}{m} (\tilde{R} F - \hat{F}) - \hat\omega_\times \tilde{v}
    ,
\end{align}
\label{eq:se_2_3_error_dynamics}
\end{subequations}
and from \eqref{eq:rigid_f_tilde}, 
$\omega$ and $\hat\omega$ are governed by
\begin{subequations}
    \begin{align}
        \dot{\omega} &=  \mathbb{J}^{-1}(\mu - \omega \times \mathbb{J}\omega), \\
    \dot{\hat\omega} &=  \mathbb{J}^{-1}(\hat\mu - \hat\omega \times \mathbb{J}\hat \omega).
\end{align}
\end{subequations}
The key characteristics of this result are that the dynamics of $\tilde{g}$ are group affine, they depend neither on $g$ nor $\hat{g}$ but rather on $\tilde{g}$ itself, and they form a cascade with the $\omega$ and $\hat{\omega}$ dynamics.
The fact that $\dot{\tilde{g}}$ is independent of $g$ and $\hat{g}$ is a key feature of weakly invariant systems, which allows the dimension of the error dynamics to be reduced by factoring out the group.

The state manifold $\mathrm{TSE}(3)$ of the aerial vehicle is 12-dimensional, meaning that two trajectories together are governed by a 24-dimensional system. Factoring out the weak $\mathrm{SE}_2(3)$ symmetry eliminates nine dimensions, yielding a 15-dimensional error system.
In contrast, factoring out only the four-dimensional strong $\mathbb{S}^1\ltimes \R^3$ symmetry would result in a 20-dimensional error system (and it is not possible to factor out a partial symmetry in general).
Thus, studying the weak invariance of this system greatly improves the extent of reduction achieved for the error system, which has many benefits (\textit{e.g.}, in efficient learning of  tracking controllers \cite{welde2025leveragingsymmetryacceleratelearning}).

\section{Conclusion}
\label{sec:conclusion}

In this work, we proposed weak invariance as a new relaxed notion of symmetry that generalizes strong invariance and extends group affine systems to non-transitive symmetries.
Weak invariance occurs when the failure of a system to be strongly invariant is entirely captured by dynamics on the symmetry group, which is shown to have powerful implications on the structure of the system.
We show that a weakly invariant system can be decomposed into a cascade where the driven subsystem is group affine, and, if the group linear component of these dynamics is autonomous, we show how this leads to a group error with state-independent dynamics, yielding a reduction in the dimension of the joint dynamics.
We further demonstrate that a subgroup of a given weak symmetry provides another weak symmetry if and only if that subgroup is closed under the group linear dynamics on the symmetry group, and that a strong symmetry can be recovered from a weak symmetry using this insight.
Using an aerial vehicle under the influence of gravity as an example, we show that a nine-dimensional weak symmetry can be factored out of the error dynamics to achieve a significant reduction in dimensionality.
In summary, control systems exhibiting weak symmetry  
enjoy several key structural characteristics
similar to those enjoyed by strongly invariant systems, 
which have seen numerous impactful applications \cite{Mahony2008,Barrau2017,vangoor2025INS,Welde2024, Welde2023b,welde2025leveragingsymmetryacceleratelearning,pagnini_error}.
The contributions of this paper therefore serve as a basis for the exploration of novel symmetry-based methods in learning, estimation, and control for the new class of weakly invariant  systems.

\bibliographystyle{IEEEtran}
\bibliography{references.bib}

\end{document}